\newcommand{\CC}{\mathsf{CC}}
\newcommand{\cut}{\mathsf{CUT}}
\newcommand{\density}{\mathbf{density}}
\newcommand{\cE}{\mathcal{E}}
\newcommand{\IS}{\mathsf{IS}}
\newcommand{\add}{\mathsf{ADD}}
\newtheorem{theorem}{Theorem}[section]
\newtheorem{lemma}[theorem]{Lemma}
\newtheorem{claim}[theorem]{Claim}
\newtheorem{corollary}[theorem]{Corollary}
\newtheorem{fact}[theorem]{Fact}
\newtheorem{question}[theorem]{Question}
\newcommand{\ignore}[1]{}
\newcommand{\cA}{{\cal A}}
\newcommand{\cD}{\mathcal{D}}
\newcommand{\cG}{\mathcal{G}}
\newcommand{\eps}{\varepsilon}
\newcommand{\poly}{\mathrm{poly}}
\newcommand{\ceil}[1]{\lceil#1\rceil}
\newcommand{\EX}{\mathbf{E}}
\newcommand{\prob}{\mathbf{Pr}}
\newcommand{\Sec}[1]{\hyperref[sec:#1]{\Cref*{sec:#1}}} %
\newcommand{\Eqn}[1]{\hyperref[eq:#1]{(\ref*{eq:#1})}} %
\newcommand{\Fig}[1]{\hyperref[fig:#1]{Fig.\,\ref*{fig:#1}}} %
\newcommand{\Tab}[1]{\hyperref[tab:#1]{Tab.\,\ref*{tab:#1}}} %
\newcommand{\Thm}[1]{\hyperref[thm:#1]{Theorem\,\ref*{thm:#1}}} %
\newcommand{\Fact}[1]{\hyperref[fact:#1]{Fact\,\ref*{fact:#1}}} %
\newcommand{\Lem}[1]{\hyperref[lem:#1]{Lemma\,\ref*{lem:#1}}} %
\newcommand{\Prop}[1]{\hyperref[prop:#1]{Prop.~\ref*{prop:#1}}} %
\newcommand{\Cor}[1]{\hyperref[cor:#1]{Corollary~\ref*{cor:#1}}} %
\newcommand{\Conj}[1]{\hyperref[conj:#1]{Conjecture~\ref*{conj:#1}}} %
\newcommand{\Def}[1]{\hyperref[def:#1]{Definition~\ref*{def:#1}}} %
\newcommand{\Alg}[1]{\hyperref[alg:#1]{Alg.~\ref*{alg:#1}}} %
\newcommand{\Obs}[1]{\hyperref[obs:#1]{Obs.~\ref*{obs:#1}}} %
\newcommand{\Ex}[1]{\hyperref[ex:#1]{Ex.~\ref*{ex:#1}}} %
\newcommand{\Clm}[1]{\hyperref[clm:#1]{Claim~\ref*{clm:#1}}} %
\newcommand{\Step}[1]{\hyperref[step:#1]{Step~\ref*{step:#1}}} %
\crefname{algocf}{alg.}{algs.}
\Crefname{algocf}{Alg.}{Algs.}
\title{Optimal Graph Reconstruction by Counting Connected Components in Induced Subgraphs}
\author{
Hadley Black\thanks{All authors supported by NSF HDR TRIPODS Phase II grant 2217058 (EnCORE Institute).} \\
UC San Diego 
\and Arya Mazumdar\footnotemark[1] \\
UC San Diego
\and Barna Saha\footnotemark[1] \\
UC San Diego 
\and  Yinzhan Xu\footnotemark[1] \\
UC San Diego
}
\date{}
\begin{document}

\maketitle

\begin{abstract}

The graph reconstruction problem has been extensively studied under various query models. In this paper, we propose a new query model regarding the number of connected components, which is one of the most basic and fundamental graph parameters. Formally, we consider the problem of reconstructing an $n$-node $m$-edge graph with oracle queries of the following form: provided with a subset of vertices, the oracle returns the number of connected components in the induced subgraph. We show $\Theta(\frac{m \log n}{\log m})$ queries in expectation are both sufficient and necessary to adaptively reconstruct the graph. In contrast, we show that $\Omega(n^2)$ non-adaptive queries are required, even when $m = O(n)$. We also provide an $O(m\log n + n\log^2 n)$ query algorithm using only two rounds of adaptivity.

\end{abstract}

\section{Introduction}

Graph reconstruction is a classical problem that aims to learn a hidden graph by asking queries to an oracle. Over the last three decades, the problem has been extensively studied under various query models (e.g., \cite{DBLP:conf/cocoon/Grebinski98, DBLP:journals/dam/GrebinskiK98, DBLP:journals/siamcomp/AlonBKRS04, DBLP:journals/siamdm/AlonA05, DBLP:conf/alt/ReyzinS07, DBLP:journals/jcss/AngluinC08, CK10, DBLP:conf/soda/Mazzawi10,DBLP:journals/tcs/BshoutyM11,DBLP:journals/tcs/BshoutyM12, DBLP:conf/colt/Choi13, DBLP:journals/siamdm/BshoutyM15,KST25}). 
Given a graph $G=(V,E)$ with an unknown edge set $E$, an algorithm may submit queries to an oracle (what the oracle computes is problem-specific) in order to recover $E$. The main objective is to minimize the number of queries.

Among all types of queries, one of the simplest and oldest is the \emph{edge-detection} model (also referred to as \emph{independent set} ($\mathsf{IS}$) queries), which was studied since \cite{DBLP:journals/dam/GrebinskiK98}. In this model, the algorithm may submit a set $S \subseteq V$ to the oracle, which returns whether the induced subgraph $G[S]$ contains at least one edge. At the beginning, this model was studied for special classes of graphs such as Hamiltonian cycles \cite{DBLP:journals/dam/GrebinskiK98}, matchings \cite{DBLP:journals/siamcomp/AlonBKRS04}, stars, and cliques \cite{DBLP:journals/siamdm/AlonA05}. The problem was then studied on general graphs by \cite{DBLP:journals/siamdm/AlonA05, DBLP:conf/alt/ReyzinS07} 
and the best query complexity known is $O(m \log n)$ by \cite{DBLP:journals/jcss/AngluinC08}, which is optimal when $m \le n^{2-\eps}$ for any $\eps > 0$. 

Several natural ways to strengthen the edge-detection query model have also been studied: 

\begin{itemize}\itemsep0em 
    \item \emph{Additive ($\add$) queries:} Given set $S \subseteq V$, the oracle returns the number of edges in $G[S]$ \cite{DBLP:conf/cocoon/Grebinski98, DBLP:journals/algorithmica/GrebinskiK00, DBLP:journals/tcs/BshoutyM11, CK10,DBLP:conf/soda/Mazzawi10,  DBLP:journals/tcs/BshoutyM12, DBLP:conf/colt/Choi13, DBLP:journals/siamdm/BshoutyM15}.
    \item \emph{Cross-additive ($\cut$) queries:} Given disjoint sets $S,T \subseteq V$, the oracle returns the number of edges between $S$ and $T$ \cite{CK10, DBLP:conf/colt/Choi13}.
    \item \emph{Maximal independent set ($\mathsf{MIS}$) queries:} Given set $S \subseteq V$, the oracle returns an adversarially chosen maximal independent set in $G[S]$ \cite{KST25}.
\end{itemize}

The additive and cross-additive models have garnered significant attention %
and are known to have query complexity $\Theta(\frac{m \log (n^2/m)}{\log m})$ \cite{CK10, DBLP:conf/soda/Mazzawi10}. For weighted graphs, where instead of the number of edges, the oracle returns the total weights of these edges and the goal is the to reconstruct all edges with their weights, the query complexity becomes $\Theta(\frac{m \log n}{\log m})$ \cite{DBLP:journals/tcs/BshoutyM11, DBLP:conf/colt/Choi13, DBLP:journals/siamdm/BshoutyM15}. Interestingly, there exist non-adaptive algorithms  that attain these query complexities in both models and for both unweighted and weighted graphs.\footnote{An algorithm is called non-adaptive if all of its queries can be specified in one round, and adaptive otherwise.} These models also have connections with classic algorithmic and combinatorial questions such as \emph{group testing} and \emph{coin-weighing}. 

Maximal independent set queries were introduced very recently by \cite{KST25}, motivated by the existence of efficient algorithms for maximal independent set in various computation models such as the Congested-Clique model, the LOCAL model and the semi-streaming model. These efficient algorithms make maximal independent set queries a potential candidate to use as a building block in more complicated algorithms. \cite{KST25} was also motivated to study whether $\mathsf{MIS}$-queries are strictly more powerful than $\IS$-queries. 
They obtained upper and lower bounds on the query complexity mainly in terms of the maximum degree of the graph.

\paragraph{Connected Component Count Queries.}

In this paper, we initiate the study of graph reconstruction using an oracle which returns the \emph{number of connected components} in $G[S]$ ($\CC$-queries), which is one of the most basic and fundamental graph parameters. This model is a natural way to strengthen $\IS$-queries, as there exists an edge in $G[S]$ if and only if the number of connected components is strictly less than $|S|$. %

Similar to maximal independent set, the number of connected components can also be computed efficiently in other computation models. For instance, \cite{DBLP:conf/podc/GhaffariP16} showed how to compute the number of connected components in $O(\log^* n)$ rounds in the Congested-Clique model. An efficient $\tilde{O}(n)$-space streaming algorithm also trivially exists, by maintaining an arbitrary spanning forest. These efficient algorithms make it hopeful to use $\CC$-queries as a basic building block in other algorithmic applications. It is worth noting that, estimating the number of connected components was also considered in the statistics literature~\cite{frank1978estimation,bunge1993estimating}.

An additional motivation for our study of graph reconstruction with $\CC$ queries is that it generalizes the problem of learning a partition using rank/subset queries, studied recently by \cite{DBLP:conf/fsttcs/Chakrabarty024} and \cite{BLMS24}. In particular, this problem corresponds to the special case when the hidden graph $G$ is known to be a disjoint union of cliques and is itself a generalization of the well-studied problem of clustering using pair-wise same-set queries (e.g. \cite{DBLP:conf/alt/ReyzinS07,balcan2008clustering,mitzenmacher2016predicting,ashtiani2016clustering,MS17a,MS17b,mazumdar2017theoretical,saha2019correlation, huleihel2019same,bressan2020exact,liu2022tight,del2022clustering,DMT24}).

\subsection{Results}

As our main result, we settle the query complexity of graph reconstruction with $\CC$-queries, by showing that $\Theta(\frac{m\log n}{\log m})$ queries are both necessary and sufficient for any value of $m$. Our main result can be more formally described as the following two theorems:

\begin{restatable}[Adaptive Algorithm]{theorem}{AdaptiveAlgo}
\label{thm:adaptive-alg}
There is an adaptive, randomized polynomial time algorithm that, given $\CC$-query access to an underlying (unknown) $n$-node graph $G = (V, E)$, the vertex set $V$ and an upper bound $m$ on the number of edges, reconstructs $G$ using $O(\frac{m \log n}{\log m})$ $\CC$-queries in expectation. 
\end{restatable}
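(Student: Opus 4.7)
The target $O(m\log n/\log m)$ matches the information-theoretic bound: the graph carries entropy $\Theta(m\log(n^2/m))$ and each $\CC$-query returns a value in $\{1,\ldots,n\}$, giving at most $\log n$ bits per query, so we must extract roughly $\log m$ bits per query on average. My plan is to achieve this by processing vertices in an online manner and reducing each neighborhood-recovery task to a high-arity decision tree whose internal queries each extract $\Theta(\log m)$ bits.

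\smallskip

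\emph{Online reduction to single-neighborhood recovery.} Fix an arbitrary ordering $v_1,\ldots,v_n$ and inductively assume $G[U_{i-1}]$ is already reconstructed, where $U_{i-1}=\{v_1,\ldots,v_{i-1}\}$. The task at step $i$ is to determine $N(v_i)\cap U_{i-1}$. Because $G[U_{i-1}]$ is known, for any $S\subseteq U_{i-1}$ the value $\CC(G[S])$ requires no query, and a single $\CC$-query on $S\cup\{v_i\}$ yields the quantity
\[
k(S)\;=\;\bigl|\{\,C\text{ component of }G[S] : C\cap N(v_i)\neq\emptyset\,\}\bigr|
\]
via the identity $k(S)=\CC(G[S])-\CC(G[S\cup\{v_i\}])+1$. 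This count-of-alive-components primitive is what I will use throughout; it is the $\CC$ analogue of a per-vertex additive query, quotiented by the known component structure on $S$.

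\smallskip

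\emph{Multi-way search for one neighborhood.} To identify $N(v_i)\cap U_{i-1}$ when $d_i:=|N(v_i)\cap U_{i-1}|$ may be as large as $n$, I use a $b$-ary recursive partition with $b=\Theta(m^{\delta})$ for a small constant $\delta>0$. At each recursion node, the frontier is a collection of known components of $G[U_{i-1}]$; these are split into $b$ super-blocks, and one query on the union of these super-blocks together with $v_i$ reveals how many are ``alive'' (i.e.\ contain a neighbor of $v_i$). By treating entire components (rather than raw vertices) as the atomic units, the answer takes one of $\Theta(b)$ values and conveys $\Theta(\log b)=\Theta(\log m)$ bits. Recursion proceeds only into alive super-blocks, and standard branching bookkeeping yields $O(d_i\log_b n)=O(d_i\log n/\log m)$ queries for vertex $v_i$. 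Summing over $i$ gives $O((m+n)\log n/\log m)=O(m\log n/\log m)$ after a cheap preprocessing sweep identifies the (up to) $n$ isolated vertices and removes them from the online loop; not knowing the exact value of $m$ is absorbed into an outer geometric-doubling argument that preserves the expected query bound.

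\smallskip

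\emph{Main obstacle.} The crux of the proof is that a $\CC$-query returns only the \emph{count} of alive super-blocks, whereas the algorithm needs to know \emph{which} super-blocks are alive in order to recurse. Naively extracting this extra $\log\binom{b}{k}$ bits of information would re-introduce an unwanted $\log b$ factor. My plan is to resolve this by randomized labeling of super-blocks at each level: each super-block is assigned a random tag, and a constant (amortized) number of additional $\CC$-queries on random sub-unions list-decodes the alive identities with high probability, in the spirit of the Bshouty--Mazzawi weighted-$\add$ construction but adapted to the component-count setting. Showing that the component background $G[S]$ remains rigid enough across recursion levels for $k(S)$ to be a faithful proxy for $|N(v_i)\cap S|$, and that the amortization is tight enough to keep the per-vertex overhead at $O(\log n/\log m)$ in expectation, is where I expect the bulk of the technical work to lie.
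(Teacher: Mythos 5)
Your approach is genuinely different from the paper's (the paper randomly samples induced subgraphs at rate $p = \Theta((mD)^{-1/3})$ so that they are matchings with constant probability, and then simulates Mazzawi's additive-query reconstruction via the forest identity $\density(S) = |S| - \CC(S)$; it partitions vertices into $O(\log\log m)$ degree buckets rather than processing them online), but there are two gaps in your sketch that I do not see how to close. The first is in the quantitative heart of your $b$-ary branching: you claim one query on a union of super-blocks "reveals how many are alive." What the primitive $k(S) = \CC(S) - \CC(S\cup\{v_i\}) + 1$ actually returns is the number of \emph{components of $G[S]$} touching $v_i$, which is in general unrelated to the number of alive \emph{super-blocks} once each super-block aggregates many components; the value lives in a range that need not be $\Theta(b)$. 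Even granting a super-block count, a count does not tell you \emph{which} super-blocks to recurse into — you flag this as your "main obstacle" — and the proposed fix by "randomized labeling" plus "list-decoding" is exactly the step that must extract $\Theta(\log m)$ useful bits per query without reintroducing a $\log b$ overhead, and you give no construction for it; the component-count oracle is not an additive query over your labels, so the Bshouty--Mazzawi machinery does not transfer as a black box.

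The second gap is that "components as atomic units" stops helping as soon as $v_i$ has neighbors inside a single large connected component $C$ of $G[U_{i-1}]$. To isolate the actual neighbor vertices you must query proper subsets $S' \subsetneq C$, and $G[S']$ can fragment into many components, so the information content of $k(S')$ is again uncontrolled and a straightforward search inside $C$ costs $\Theta(\log n)$ per neighbor — exactly the $\log m$ factor the theorem needs to save. You acknowledge that the rigidity of the component background across recursion levels is "where the bulk of the technical work lies," and I agree, but I do not see a route to fill it with the alive-component primitive alone. The paper sidesteps this entirely: when it must find $N(v)\cap U$ for a known $G[U]$ of maximum degree $D$, it properly colors $G[U]$ with $D+1$ colors so that each color class together with $v$ induces a star (hence a forest), applies forest reconstruction to each, and obtains the $\log/\log$ saving from Mazzawi's additive-query bound rather than from any multi-way search over components.
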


\begin{restatable}[Adaptive Lower Bound]{theorem}{AdaptiveLower}
    \label{thm:LB-A} Any (randomized) adaptive graph reconstruction algorithm requires $\Omega(\frac{m \log n}{\log m})$ $\CC$-queries in expectation to achieve $1/2$ success probability for all $m \leq \binom{n}{2} - 1$ and even when $m$ is provided as input.
\end{restatable}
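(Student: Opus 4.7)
My plan is to apply Yao's minimax principle: for each $m$, exhibit a distribution $\cD_m$ on graphs with at most $m$ edges so that any deterministic algorithm needs $\Omega(m\log n/\log m)$ CC-queries in expectation to succeed on $\cD_m$ with probability $\ge 1/2$. The starting structural fact is that for any $G$ with $m$ edges and any $S\subseteq V$, a spanning forest of $G[S]$ has $|S|-c(G[S])$ edges, so $|S|-e(G[S])\le c(G[S])\le |S|$; combined with $e(G[S])\le m$, this shows each CC-query reveals at most $\log\min(n,m+1)$ bits.

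For $m\le n^{2-\epsilon}$ (any constant $\epsilon>0$), I take $\cD_m$ to be uniform over graphs with exactly $m$ edges: the support has size $N=\binom{\binom{n}{2}}{m}$ with $\log N=\Theta(m\log(n^2/m))$. A $t$-query deterministic algorithm produces at most $\min(n,m+1)^t$ distinct transcripts, and each transcript enables correct output on at most one graph, so success on at least $N/2$ graphs forces
\[
t \;\ge\; \frac{\log(N/2)}{\log\min(n,m+1)} \;=\; \Omega\!\left(\frac{m\log(n^2/m)}{\log\min(n,m+1)}\right) \;=\; \Omega\!\left(\frac{m\log n}{\log m}\right),
\]
where the last equality follows by a short case analysis on whether $m\le n$ (so $\log\min(n,m+1)=\Theta(\log m)$ and $\log(n^2/m)=\Theta(\log n)$) or $n<m\le n^{2-\epsilon}$ (so both the numerator and the denominator are $\Theta(\log n)$, matching the target $\Theta(m)$).

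For very dense $m$ (close to $\binom{n}{2}$) the information-theoretic bound weakens because $\log(n^2/m)$ becomes small. Here I take $\cD_m$ supported on $G=K_n-F$ where $F$ is a random $k:=\binom{n}{2}-m$ subset drawn from a structured subfamily (e.g., $F$'s with bounded maximum degree), and invoke an adversary that answers $c=1$ to every query. The structural constraint on $F$ ensures $K_n-F[S]$ remains connected for every $|S|\ge 3$, so the ``$c=1$'' reply to non-pair queries is consistent with the entire support; for each pair query $\{u,v\}$, it forces $\{u,v\}\notin F$. Consequently, after $t_1\le t$ pair queries, the set of surviving $F$'s is the collection of $k$-subsets from the family avoiding the $t_1$ queried pairs, of size at least $\binom{\binom{n}{2}-t_1}{k}\cdot(1-o(1))$, which exceeds $1$ unless $t_1\ge m-O(1)$. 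Thus $t\ge t_1=\Omega(m)=\Omega(m\log n/\log m)$, using $\log m=\Theta(\log n)$ in this regime.

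The main technical obstacle is the dense-regime adversary: the support of $\cD_m$ must be large (so the counting bound $\binom{\binom{n}{2}-t_1}{k}$ remains tight) and must simultaneously guarantee that the ``always-$c=1$'' answer is consistent with every $F$ in the support across every query size. Bounded degree handles large $|S|$, but small-$|S|$ queries (e.g., $|S|=3$, where disconnection corresponds to $F$ containing a triangle on $S$) require a girth-like constraint or an explicit counting bound on ``bad'' $F$'s that disconnect some $G[S]$. Balancing these structural restrictions against the required size of the family is where the technical heart of the dense regime lies.
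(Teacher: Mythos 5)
Your information-theoretic argument for $m \leq n^{2-\epsilon}$ is correct and is a mild refinement of the paper's own counting argument: the paper uses essentially the same bound but only applies it for $m \leq n$, whereas you observe the answer range is bounded by $\min(n, m+1)$ and thereby push it up to $m \leq n^{2-\epsilon}$.

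The dense-regime argument, however, has a genuine gap, and not only where you flagged it. For $|S|=3$, $K_3 - F[S]$ is disconnected as soon as $F$ contains \emph{two} of the three edges on $S$ (a path of length two), not a full triangle as you state. To keep the always-$c{=}1$ adversary consistent at $|S|=3$ across the entire support, every $F$ in the support must contain no path of length two, i.e.\ $F$ must be a matching, which forces $|F| \leq n/2$ and hence $m \geq \binom{n}{2}-n/2$. More seriously, even if you could handle small $|S|$ by some clever restriction, no bounded-degree or girth constraint is compatible with $|F| = \binom{n}{2}-m = \Theta(n^2)$, which is what occurs throughout $n^{2-\epsilon} < m < \binom{n}{2} - \Theta(n)$. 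Your information-theoretic bound also fails in that range: for $m = n^2/\mathrm{polylog}\,n$, $\log(n^2/m) = o(\log n)$ so the bound degrades to $o(m)$, while $\frac{m\log n}{\log m} = \Theta(m)$. So neither half of your argument covers the middle range, and the $K_n - F$ adversary idea does not extend to it.

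The paper closes this gap with a different hard family that works uniformly for all $m \geq n$: pick $n_0 = \Theta(\sqrt{m})$, place a clique $K_{n_0}$ on a fixed subset $K$ of $n_0$ vertices with a \emph{single} unknown edge deleted, and pad with fixed dummy edges and isolated vertices so the total edge count is exactly $m$. Since $K_t$ minus one edge is connected for every $t \geq 3$, any query $S$ with $|S \cap K| \neq 2$ has an answer determined independently of which edge is missing, and a query with $S \cap K = \{u,v\}$ reveals exactly whether $\{u,v\}$ is the missing edge. This gives a zero-overhead reduction from unstructured search over $\binom{n_0}{2} = \Theta(m)$ candidates, hence an $\Omega(m)$ lower bound for all $n \leq m \leq \binom{n}{2}-1$. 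In effect, the paper localizes the uncertainty to a single missing edge on a $\Theta(\sqrt m)$-vertex clique rather than $\binom{n}{2}-m$ missing edges on all of $V$, which is precisely what makes the ``always connected on $\geq 3$ vertices'' consistency condition come for free.
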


Compared with the $O(m \log n)$ bound for $\IS$-queries \cite{DBLP:journals/jcss/AngluinC08}, which is tight when $m \le n^{2-\eps}$ for any $\eps > 0$, our bound shows that $\CC$-queries are indeed strictly stronger than $\IS$-queries, at least when $\omega(1) < m < n^{2-\eps}$. 

The comparison between our bound and the bounds for (cross)-additive queries is more intriguing. Our bound coincides with the bound of (cross)-additive queries for \emph{weighted graphs} \cite{DBLP:journals/tcs/BshoutyM11, DBLP:conf/colt/Choi13, DBLP:journals/siamdm/BshoutyM15}; whereas for unweighted graphs, the $\Theta(\frac{m \log (n^2/m)}{\log m})$ bound achieved by \cite{CK10, DBLP:conf/soda/Mazzawi10} is smaller than our bound when $m$ is very close to $n^2$. In particular, for dense graphs where $m = \Omega(n^2)$, $\Theta(n^2)$ $\CC$-queries are required, while only $\Theta(n^2/\log n)$ (cross)-additive queries suffice for unweighted graphs.

Since our bound is similar to those of (cross)-additive queries, and there are non-adaptive algorithms that achieve the same bounds for (cross)-additive queries for both unweighted graphs \cite{CK10} and for weighted graphs \cite{DBLP:journals/tcs/BshoutyM11, DBLP:journals/siamdm/BshoutyM15}, it is natural to ask whether it is possible to obtain the optimal query complexity for $\CC$-queries using a \emph{non-adaptive} algorithm. Perhaps surprisingly, we prove that this is not possible for $\CC$ queries in a strong sense: we show that non-adaptive algorithms require $\Omega(n^2)$ $\CC$-queries even for \emph{sparse} graphs, matching the trivial upper bound obtained by simply querying every pair of nodes. This establishes a stark contrast between $\CC$ queries and (cross)-additive queries.

\begin{restatable}[Non-Adaptive Lower Bound]{theorem}{NonAdaptiveLower}
    \label{thm:LB-NA} Any non-adaptive graph reconstruction algorithm requires $\Omega(n^2)$ $\CC$-queries in expectation to achieve $1/2$ success probability, even when the graph is known to have $O(n)$ edges.
\end{restatable}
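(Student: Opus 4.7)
The plan is to invoke Yao's minimax principle and exhibit a distribution on graphs with $O(n)$ edges for which any deterministic non-adaptive algorithm using $o(n^2)$ $\CC$-queries succeeds with probability strictly less than $1/2$. The hard distribution hides two independent one-bit secrets inside two \emph{book} gadgets on disjoint vertex sets: each book has many internally disjoint length-2 paths between a pair of spine vertices, so the bit indicating whether the spine edge is present is invisible unless a query contains exactly the spine and no other vertex of that book.

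Partition $V=V_1\sqcup V_2$ with $|V_1|=|V_2|=N:=n/2$. Independently for $i\in\{1,2\}$, sample a uniformly random 2-subset $\{u_i,v_i\}\subseteq V_i$, let $W_i:=V_i\setminus\{u_i,v_i\}$, and let $B_i$ be the complete bipartite graph on $V_i$ with parts $\{u_i,v_i\}$ and $W_i$. Independently sample $X_1,X_2\in\{0,1\}$ uniformly. The random graph $G$ is obtained from $B_1\cup B_2$ by adding the edge $\{u_i,v_i\}$ for each $i$ with $X_i=1$; it has at most $2(2N-4)+2=2n-6$ edges.

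The key structural observation is that, because $B_i$ contains $N-2$ internally disjoint length-2 paths between $u_i$ and $v_i$ (via each vertex of $W_i$), the edge $\{u_i,v_i\}$ bridges two components of $G[S]$ if and only if $S\cap V_i=\{u_i,v_i\}$ exactly. Consequently, for any fixed query $S$ and any $i$, the probability (over the random spine $\{u_i,v_i\}$) that $S$ distinguishes $X_i=0$ from $X_i=1$ is precisely $\mathbbm{1}[|S\cap V_i|=2]/\binom{N}{2}$, which is at most $2/(N(N-1))$. A union bound over the $q$ queries then shows that the probability $p_i$ that the bit $X_i$ is revealed by at least one query satisfies $p_i\leq q/\binom{N}{2}=O(q/n^2)$.

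Because the two spines lie in disjoint vertex sets and are sampled independently, the revelation events for $X_1$ and $X_2$ are independent. Conditional on $X_i$ being unrevealed, the entire vector of query answers is independent of $X_i$, so the algorithm's output is independent of $X_i$ and guesses $X_i$ correctly with probability exactly $1/2$; conditional on $X_i$ being revealed, we generously grant the algorithm success probability $1$. Expanding over the four joint outcomes of the revelation indicators gives an upper bound on the algorithm's overall success probability of $(1+p_1)(1+p_2)/4$, and demanding this to be at least $1/2$ requires $(1+p_1)(1+p_2)\geq 2$, which in turn forces $q\geq(\sqrt{2}-1)\binom{N}{2}=\Omega(n^2)$. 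The main delicate point is the structural claim that the query answer $\CC(G[S])$ can depend on $X_i$ only through the tight condition $S\cap V_i=\{u_i,v_i\}$; this is precisely where the book gadget's abundant $u_i$--$v_i$ connectivity is essential, since even a single vertex $w\in W_i\cap S$ already connects $u_i$ to $v_i$ in $G[S]$ and renders the spine edge invisible to the query.
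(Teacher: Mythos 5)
Your construction is essentially the same ``book'' gadget the paper uses: a pair of spine vertices $\{u,v\}$ joined by many internally disjoint length-$2$ paths, so that the spine edge changes $\CC(G[S])$ only when $S$ meets the book in exactly $\{u,v\}$. What you do differently is plant \emph{two} independent copies of this gadget on disjoint halves of $V$ and run a Yao-style calculation; the paper instead uses a single gadget, observes that a query set $Q$ can distinguish at most $|Q|$ of the $\binom{n}{2}$ pairs $\{u,v\}$, and finishes with linearity of expectation directly over the randomized algorithm's query distribution. Your two-copy variant has the advantage that, when neither bit is revealed, the best the algorithm can do is a $1/4$ success probability, so the stated $1/2$ threshold is comfortably violated; with a single hidden bit (as in the paper), an algorithm that merely guesses that bit already achieves success probability $1/2$ on each input in the family, so the single-gadget counting argument is delicate exactly at the $1/2$ threshold and is cleaner when the required success probability is, say, $2/3$. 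Conversely, the paper's direct counting argument handles the \emph{expected}-query-count requirement automatically (it is a single application of linearity of expectation), whereas your write-up phrases the bound for a \emph{fixed} query budget $q$ and then invokes Yao's principle: the bound $(1+p_1)(1+p_2)/4$ with $p_i \le q/\binom{N}{2}$ is convex in $q$, so one cannot apply Jensen directly to pass from fixed $q$ to expected $q$. This gap is easily closed --- since $p_i \le 1$ one can linearize, e.g.\ $(1+p_1)(1+p_2)/4 \le \tfrac14 + \tfrac34\cdot |Q|/\binom{N}{2}$, and then take expectations over the algorithm's coins --- but it should be stated. With that small fix the argument is correct, and arguably more robust than the paper's at the exact $1/2$ threshold.
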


\cref{thm:LB-NA} also implies $\Omega(n^2)$ $\IS$-queries are required on sparse graphs for randomized algorithms, which is on par with the $\Omega(m^2 \log n)$ non-adaptive lower bound for $\IS$-queries shown in \cite{DBLP:conf/alt/AbasiN19} for certain ranges of $m$. 

In light of this strong lower bound for non-adaptive algorithms, we investigate whether efficient algorithms are possible using few rounds of adaptivity. We show there is an algorithm using only \emph{two rounds} of adaptivity which comes within an $O(\log^2 n)$ factor of the optimal query complexity\footnote{In fact, when $m \geq \Omega(n \log n)$ our algorithm is within a factor of $O(\log n)$ of the optimal query complexity.}.

\begin{restatable}[Two Round Algorithm]{theorem}{TwoRound}
\label{thm:2round-alg}
    There is a randomized algorithm using two rounds of adaptivity and $O(m \log n + n\log^2 n)$ $\CC$-queries in the worst case which successfully reconstructs any arbitrary $n$-node graph with at most $m$ edges with probability $1-1/\poly(n)$. The upper bound $m$ is provided as input to the algorithm.
\end{restatable}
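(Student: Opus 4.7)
The plan is a two-round algorithm in the ``learn then search'' paradigm. Round 1 uses $O(n \log^2 n)$ non-adaptive queries to obtain, for each vertex $v$, a constant-factor estimate $\hat{d}_v$ of $\deg(v)$ together with a ``correct scale'' $\ell^{*}_{v} \approx \log \deg(v)$ at which a random subsample of $V$ typically isolates individual neighbors of $v$. Round 2, designed from the Round 1 responses, uses $O(m \log n)$ additional queries to recover all edges via parallel non-adaptive binary searches within those subsamples.

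\textbf{Round 1.} For each vertex $v \in V$ and each level $\ell \in \{0, 1, \ldots, \lceil \log n \rceil\}$, draw $\Theta(\log n)$ independent random subsets $S^{v}_{\ell, j} \subseteq V \setminus \{v\}$ by including each other vertex with probability $2^{-\ell}$, and query both $\CC(S^{v}_{\ell, j})$ and $\CC(\{v\} \cup S^{v}_{\ell, j})$. From these one extracts
\[
c(v, S^{v}_{\ell, j}) := \CC(S^{v}_{\ell, j}) + 1 - \CC(\{v\} \cup S^{v}_{\ell, j}),
\]
which equals the number of components of $G[S^{v}_{\ell, j}]$ containing at least one neighbor of $v$; in particular $c(v, S) \leq |N(v) \cap S|$, with equality whenever distinct neighbors of $v$ lie in distinct components of $G[S]$. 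For $\ell$ large enough that $S^{v}_{\ell, j}$ acts as an independent set around $N(v)$ with high probability, equality holds and $c(v, S^{v}_{\ell, j})$ is distributed as $\mathrm{Binomial}(\deg(v), 2^{-\ell})$. Combining evidence across levels yields a constant-factor estimate $\hat{d}_v$ and the level $\ell^{*}_{v} := \lceil \log \hat{d}_v \rceil$ at which an average sample contains $\Theta(1)$ neighbors of $v$.

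\textbf{Round 2.} For each $v$, draw $\Theta(\hat{d}_v)$ fresh random subsets $T^{v}_i$ at rate $2^{-\ell^{*}_{v}}$ and compute $c(v, T^{v}_i)$ from two queries per subset. For each $T^{v}_i$ with $c(v, T^{v}_i) = 1$, perform an $O(\log n)$-query non-adaptive binary search: for each bit position, query $\CC$ on the subset of $T^{v}_i$ whose elements have that bit set to $0$ (both with and without $v$) and detect on which side the unique neighbor's component lies, thereby reading off the neighbor's bit pattern. The per-vertex cost is $O(\hat{d}_v \log n)$ and $\sum_v O(\hat{d}_v \log n) = O(m \log n)$.

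\textbf{Main obstacles.} Two issues must be handled carefully. First, $c(v, S)$ only lower-bounds $|N(v) \cap S|$: if two neighbors of $v$ share a component of $G[S]$ we undercount. I would argue that at the relevant level $\ell^{*}_{v}$, the expected number of pairs of neighbors of $v$ that share an edge-path in $G[S]$ is $o(1)$, controlled by $m$ and $2^{-\ell^{*}_{v}}$, so equality holds with high probability. Second, coverage: with only $\Theta(\hat{d}_v)$ samples per vertex, each neighbor of $v$ appears in only $\Theta(1)$ samples in expectation and a constant fraction can be missed. To close this gap within the query budget, I would (a) share samples across vertices, since a single sample $T$ at rate $p$ supplies useful information for every vertex via the $n$ shared queries $\CC(T \cup \{v\})$, and (b) draw samples at a small range of nearby levels $\ell^{*}_{v} \pm O(1)$, boosting each neighbor's appearance probability while absorbing the overhead into the Round 1 $O(n \log^2 n)$ term. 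The hardest part of the analysis is making this sharing-plus-multi-level coverage argument quantitatively tight so that, with probability $1 - 1/\poly(n)$, every edge is witnessed by at least one successful binary search.
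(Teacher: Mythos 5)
Your Round 1 is in the same spirit as the paper's (degree estimation via random subsets at logarithmically many scales, using the fact that two $\CC$-queries detect whether a set meets $N(v)$), and your observation that $c(v,S)=\CC(S)+1-\CC(S\cup\{v\})$ undercounts $|N(v)\cap S|$ when neighbors share a component is a real issue that the paper sidesteps by only ever using the binary indicator $\boldsymbol{1}(\deg(v,S)>0)$. But Round 2 has a genuine gap that your proposed fixes do not close.

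The coverage problem is not a technicality. Fix $v$ with degree $d$ and sample rate $p\approx 1/d$. A given neighbor $w$ is isolated (the unique neighbor of $v$ in $T$) with probability $\approx p(1-p)^{d-1}\approx p/e$. With only $\Theta(\hat{d}_v)=\Theta(d)$ samples, the probability that $w$ is isolated in at least one sample is $1-(1-p/e)^{\Theta(d)}$, a constant bounded away from $1$. To drive the per-neighbor failure probability to $1/\poly(n)$ you need $\Theta(d\log n)$ samples, and since your binary search costs $O(\log n)$ queries per sample, Round 2 becomes $O(m\log^2 n)$ rather than $O(m\log n)$. Sampling at $\ell^{*}_v\pm O(1)$ only changes $p$ by constant factors and buys constant-factor improvements in coverage, not the needed $\log n$ factor. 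Sharing one pool of sets $T$ across vertices and paying one query $\CC(T\cup\{v\})$ per pair also does not help: the bottleneck is not building the sets but that, for each $v$, an independent coin is flipped per sample per neighbor, and $\Theta(d)$ samples simply do not cover all $d$ neighbors with high probability. The structural reason your approach overpays is that it insists on extracting a full neighbor identity from each single sample via an in-sample binary search; standard nonadaptive group testing (the route the paper takes, citing a lemma of [BLMS24] that recovers a $d$-sparse Boolean vector from $O(d\log(n/\alpha))$ $\mathsf{OR}$-queries, simulated by two $\CC$-queries each) instead decodes jointly across all $O(d\log n)$ responses, so no isolation-per-sample event is needed and the total is $O(m\log n)$. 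To repair your proof you should replace the per-sample binary search with such a joint decoder (e.g. the COMP/DD-style decoder), at which point the construction essentially coincides with the paper's.

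A secondary gap: even in a sample with $c(v,T)=1$, your binary search reads off the bit pattern of the unique component meeting $N(v)$, but that component may contain several neighbors of $v$, in which case the bit reads are inconsistent (some restricted subsets contain one neighbor, others contain another, and $c(v,\cdot)$ on the restriction may jump to $2$ or stay at $1$ with a different witness). You flag this and propose to bound the probability of collisions at the working scale, but the bound depends on the ambient edge structure of $G[T]$, not just on $\deg(v)$, so it needs an argument in terms of $m$ and $p$ that you have not supplied. The paper's use of the pure OR-indicator avoids the issue entirely.
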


\cite{DBLP:conf/alt/AbasiN19} proved that with two rounds of adaptivity $m^{4/3-o(1)}\log n$ $\IS$-queries are required, which again separates $\IS$-queries with $\CC$-queries.

Our two round algorithm utilizes an interesting connection with group testing and can be described at a high level as follows. The first round of queries is used to obtain a constant factor approximation of the degree of every vertex using $O(n \log^2 n)$ queries. Then, once an approximation of the degree $d_v$ of a vertex $v$ is known, a simple group testing procedure can be leveraged to learn its neighborhood with $O(d_v \log n)$ queries. The second round runs this procedure in parallel for every vertex for a total of $O(m \log n)$ queries.

\paragraph{Paper Overview.} The main effort of our paper is in proving the correctness of our optimal adaptive algorithm, \Cref{thm:adaptive-alg}. Our proof uses multiple important subroutines  which we describe in \Cref{sec:useful-subroutines} and \Cref{sec:key-subroutine}. Given these subroutines, the main proof is given in \Cref{sec:a-alg-proof}. A high level overview of the proof is given in \Cref{sec:technical-overview}. The proof of correctness for our two round algorithm (\Cref{thm:2round-alg}) is given in \Cref{sec:2-round-proof} and we prove our lower bounds (\Cref{thm:LB-A} and \Cref{thm:LB-NA}) in \Cref{sec:LB}.

\subsection{Other Related Works}

Aside from the edge-detection, (cross-)additive,  and maximal independent set already discussed, the graph reconstruction problem has also been studied extensively in the \emph{distance} query model by \cite{DBLP:conf/sirocco/KranakisKU95, DBLP:journals/dam/GrebinskiK98, DBLP:journals/jsac/BeerliovaEEHHMR06, DBLP:conf/ciac/ErlebachHHM06, DBLP:conf/alt/ReyzinS07,
DBLP:conf/icalp/MathieuZ13, DBLP:conf/icalp/KannanM015, DBLP:conf/aaim/KranakisKL16,DBLP:journals/talg/KannanMZ18,DBLP:journals/rsa/MathieuZ23,DBLP:journals/corr/abs-2306-05979}. In this model, the algorithm may query a pair of vertices $(x,y)$ and the oracle responds with the shortest path distance from $x$ to $y$. This model has a notably different flavor from edge-detection and its related models. For bounded degree graphs, the current best known algorithm achieves query complexity $\widetilde{O}(n^{3/2})$, and it is a significant open question as to whether $\widetilde{O}(n)$ can be achieved. 

Query algorithms have also received a great deal of attention recently for solving various (non-reconstruction) problems over graphs, for example testing connectivity, computing the minimum cut, and computing spanning forests \cite{harvey2008matchings, DBLP:journals/corr/abs-2109-02115, DBLP:conf/esa/AssadiCK21, DBLP:conf/focs/ApersEGLMN22,DBLP:conf/alt/ChakrabartyL23,DBLP:conf/alt/0001C24,DBLP:conf/soda/0001SW25}. $\mathsf{AND}$, $\mathsf{OR}$, and $\mathsf{XOR}$ query-algorithms have also been considered for the problem of computing maximum cardinality matchings in bipartite graphs (see e.g. \cite{DBLP:conf/focs/BlikstadBEMN22}), which is a question with particular importance in communication complexity.  

Aside from graphs, there is a rich body of literature concerning reconstruction of combinatorial objects, such as strings (e.g., \cite{DBLP:conf/colt/HoldenPP18, DBLP:conf/stoc/000121}), partitions (e.g., \cite{DBLP:conf/fsttcs/Chakrabarty024,BLMS24}), matrices (e.g., \cite{DBLP:journals/siamcomp/BoutsidisDM14,DBLP:conf/stoc/CohenEMMP15}), and more.

\section{Preliminaries}

\label{sec:prelim}

Throughout this paper, we consider unweighted undirected graphs $G=(V, E)$. For $v \in V$, we let $\deg_G(v)$ denote its degree and let $N_G(v)$ denote the set of $v$'s neighbors. For $U \subseteq V$, we use $\deg_G(v, U)$ to denote the number of edges between $v$ and $U$. Furthermore, we use $\CC_G(U)$ to denote the number of connected components in the induced subgraph $G[U]$, and we use $\density_G(U)$ to denote the number of edges in the induced subgraph $G[U]$. All subscripts $G$ might be dropped if clear from context.

\subsection{Useful Subroutines} \label{sec:useful-subroutines}

Mazzawi \cite{DBLP:conf/soda/Mazzawi10} showed that one can use $O(\frac{m \log (n^2/m)}{\log m})$ additive queries to reconstruct an $n$-node $m$-edge graph. This result is a key subroutine of our result. For our purpose, an $O(\frac{m \log n}{\log m})$ bound already suffices, so we will utilize the latter bound for simplicity. 

\begin{theorem}[\cite{DBLP:conf/soda/Mazzawi10}] \label{thm:GR-density}
There is a polynomial-time algorithm  that, given $\density$-query access to an underlying (unknown) $n$-node graph $G = (V, E)$, the vertex set $V$ and the number of edges $m$, reconstructs the graph $G$ using $\frac{Cm \log n}{\log m}$ queries in the worst case for some universal constant $C$. 
\end{theorem}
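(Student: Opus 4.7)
My plan is to use the coin-weighing paradigm. A $\density$-query on $S \subseteq V$ returns the linear functional $\sum_{\{u,v\}\subseteq S} x_{uv}$ of the unknown edge-indicator vector $x \in \{0,1\}^{\binom{n}{2}}$, and since each answer is at most $m$ it conveys $O(\log m)$ bits of information. Specifying a graph with at most $m$ edges takes $\Theta(m\log(n^2/m))$ bits, so $\Omega(m\log n/\log m)$ queries are information-theoretically necessary; the task is to match this constructively and in polynomial time.

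The plan is a two-phase reduction: a \emph{localization} phase followed by a \emph{refinement} phase. In localization, I would partition $V$ into blocks $V_1,\dots,V_k$ of an appropriately chosen size and, for every pair $(i,j)$, extract the edge count $e(V_i,V_j)$ via the inclusion-exclusion identity $e(V_i,V_j) = \density(V_i\cup V_j) - \density(V_i) - \density(V_j)$. This localizes each edge to one of $O(k^2)$ bipartite or unary buckets whose edge counts are then known exactly. I would choose $k$ so that (a) the localization uses $O(m/\log m)$ queries and (b) with high probability each bucket contains few enough edges that the subsequent refinement of bucket $(i,j)$ can be done in $O(e_{ij}\log n/\log m)$ queries, making the overall sum $O(m\log n/\log m)$.

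In refinement, I would reconstruct each bucket using a Lindström-style detecting family: for a bucket with $e$ edges on $s$ vertices, a carefully chosen collection of $O(e\log n/\log m)$ subsets is \emph{detecting}, meaning the vector of $\density$ answers on those subsets uniquely determines the induced subgraph. Cross-counts between two disjoint subsets are again read through inclusion-exclusion, so bipartite buckets are handled the same way as unary ones. The main obstacle I anticipate is making the detecting family simultaneously query-optimal and polynomial-time invertible. A uniformly random detecting matrix works information-theoretically but gives no efficient decoder, while naive per-bucket coin-weighing loses a factor of $\log m$ and only yields $O(m\log n)$. Mazzawi's construction resolves this via an explicit algebraic family together with a peeling decoder that iteratively recovers one vertex's neighborhood at a time using a small batch of adaptive queries; the delicate step is arranging the recursion and book-keeping so that the per-iteration query counts telescope to exactly $O(m\log n/\log m)$ rather than the easier $O(m\log n)$ bound obtained by treating each edge independently.
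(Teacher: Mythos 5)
This theorem is not proved in the paper at all; it is quoted as a black-box citation of Mazzawi's result, and the only ``proof'' the paper supplies is the one-line remark that Mazzawi showed an $O\!\left(\frac{m\log(n^2/m)}{\log m}\right)$ bound, which the authors weaken to $O\!\left(\frac{m\log n}{\log m}\right)$ for simplicity. So there is no in-paper argument to compare yours against, and the correct move when working within this paper is simply to use the theorem as a primitive, which is what the paper does in the proof of \cref{lem:GR-forest}.

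Taken on its own terms, your sketch gestures at the right circle of ideas (coin-weighing/detecting families, the tension between information-theoretic optimality and polynomial-time decoding, using inclusion--exclusion to read cross-block edge counts from $\density$-queries), but it is not a proof: you explicitly defer the two genuinely hard steps --- the construction of a simultaneously query-optimal and efficiently decodable detecting family, and the bookkeeping that makes the per-level query costs telescope to $O\!\left(\frac{m\log n}{\log m}\right)$ rather than $O(m\log n)$ --- to ``Mazzawi's construction,'' which is exactly the content of the theorem being cited. A couple of structural details in your sketch also do not match the reference: Mazzawi's algorithm is an \emph{adaptive} recursive bisection of the vertex set (recursively reconstruct $G[V_1]$ and $G[V_2]$, then reconstruct the bipartite graph between them), rather than a flat ``localization then refinement'' into $k^2$ buckets; and the poly-time detecting-matrix construction is obtained by an adaptive, derandomization-style search rather than an explicit algebraic code with a peeling decoder. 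Since the paper treats this result as external, none of these gaps affect the paper, but if your goal were to actually reprove Mazzawi's theorem you would need to supply precisely the parts you are currently outsourcing.
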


In the following, we describe several subroutines that will be used by our algorithm. 

In some places, we will consider a more general version of graph reconstruction, where a subset of the edges are already known to the algorithm. In many scenarios, one can design algorithms whose  query complexity mostly depends on the number of unknown edges, instead of the total number of edges.

The following lemma shows a procedure to reconstruct forests. 
\begin{lemma} [Reconstructing Forests]
\label[lemma]{lem:GR-forest}
There is a $\CC$-query algorithm that, given $\CC$-query access to an underlying (unknown) $n$-node graph $G = (V, E)$, the vertex set $V$ and a set of  edges $K$, reconstructs the graph $G$ with $K \subseteq E$ when $G$ is a forest, using $O(\frac{\max(2, m_u) \log n}{ \log \max(2, m_u)})$ queries in the worst case for $m_u = |E| - |K|$. In addition, even if $G$ is not a forest, the algorithm still uses $O(\frac{\max(2, m_u) \log n}{ \log \max(2, m_u)})$ queries in the worst case, and only outputs edges in $E \setminus K$. 
\end{lemma}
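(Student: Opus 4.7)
The pivotal observation is that for any forest $H$, $\density_H(S) = |S| - \CC_H(S)$ for every $S \subseteq V$. A single $\CC$-query on $G$ therefore simulates one density query on $G$ whenever $G$ is a forest, and subtracting the freely-computable quantity $\density_K(S)$ converts it into a density query on the subgraph $G \setminus K$:
\[
\density_{G \setminus K}(S) \;=\; \bigl(|S| - \CC_G(S)\bigr) \;-\; \density_K(S).
\]
The plan is to feed these simulated density queries into Mazzawi's algorithm (\Thm{GR-density}) applied to $G \setminus K$, which has exactly $m_u$ edges in the forest case.

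I would begin by spending one $\CC$-query on $V$ to compute $\tilde m := n - \CC_G(V)$ and set $m_u' := \max(0, \tilde m - |K|)$. When $G$ is a forest we have $\tilde m = |E|$ and $m_u' = m_u$; in general $\tilde m \leq |E|$ (the spanning-forest bound), so always $m_u' \leq m_u$. If $m_u' = 0$ I return $\emptyset$; if $m_u' = 1$ I locate the single unknown edge by a standard $\CC$-query binary search in $O(\log n)$ queries; otherwise I invoke \Thm{GR-density} with edge-count parameter $m_u'$ on the simulated density oracle, and let $F$ denote its output. To protect against the non-forest case, where the simulated oracle can be inconsistent with any graph, I would then run a verification pass: each candidate $e = \{u,v\} \in F \setminus K$ is confirmed by the single query $\CC_G(\{u,v\})$ (an edge iff the answer is $1$); only verified edges enter the final output.

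If $G$ is a genuine forest the simulated oracle is exact, so \Thm{GR-density} recovers $G \setminus K$ exactly and the verification is a no-op. For arbitrary $G$ the verification ensures every returned edge lies in $E \setminus K$, while the query budget is preserved because Mazzawi's cost is controlled by $m_u'$ alone, $m_u' \leq m_u$, and the verification pass adds at most $|F| \leq m_u'$ additional queries. Combined with the monotonicity of $\frac{x \log n}{\log x}$ for $x \geq 2$, this yields the claimed $O\!\left(\frac{\max(2,m_u)\log n}{\log \max(2,m_u)}\right)$ bound. The main obstacle I anticipate is precisely the robustness argument for non-forest $G$: I need to argue that the query count of \Thm{GR-density} depends only on the declared parameter $m_u'$ and not on whether the (simulated) oracle answers are consistent with any actual $m_u'$-edge graph. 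Once this is verified from Mazzawi's construction, the verification postprocessing sanitizes any spurious output edges and the lemma follows.
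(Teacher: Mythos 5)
Your approach is essentially the paper's: simulate $\density$-queries on $G_u = (V, E\setminus K)$ via $\density_{G_u}(U) = |U| - \CC_G(U) - |E(G[U])\cap K|$, compute $\hat m_u = n - \CC_G(V) - |K|$ (which equals $m_u$ in the forest case and is $\le m_u$ always, by the spanning-forest bound), feed the simulated oracle to \Cref{thm:GR-density}, and then verify the candidate output edges one at a time. However, the concern you flag at the end is a genuine gap, and the fix is simpler than the route you propose. You say you would need to ``argue that the query count of \Cref{thm:GR-density} depends only on the declared parameter $m_u'$ and not on whether the (simulated) oracle answers are consistent with any actual $m_u'$-edge graph,'' and you plan to ``verify this from Mazzawi's construction.'' That is neither guaranteed by the theorem statement nor something you should want to prove: when $G$ is not a forest, the simulated $\density$ oracle lies (it reports $|U| - \CC_G(U)$, the size of a spanning forest of $G[U]$, not $\density_G(U)$), and a black-box algorithm fed inconsistent answers has no a priori bound on its running time. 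The paper avoids opening up Mazzawi's construction at all: it simply \emph{forcibly halts} the subroutine once it has issued $\frac{C\hat m_u \log n}{\log \hat m_u}$ queries, where $C$ is the constant from \Cref{thm:GR-density}. Since $\hat m_u \le m_u$, this budget is within $O\bigl(\frac{\max(2,m_u)\log n}{\log\max(2,m_u)}\bigr)$ regardless of what the subroutine does, and in the forest case ($\hat m_u = m_u$) the halt is never triggered so correctness is unaffected. You should replace your conditional reliance on an unproven property of Mazzawi's algorithm with this explicit halting mechanism; with that change, and keeping your verification pass (which, as you note, adds at most $m_u$ queries and in the paper is additionally made safe by stopping at the first non-edge found), the argument is complete.
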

\begin{proof}
At the beginning of the algorithm, we make one query to compute the number of connected components $\CC(G)$ of the whole graph, and set $\hat{m}_u = n - \CC(G) - |K| \le m_u$.

First, we assume $G$ is a forest. In this case, note that $\hat{m}_u = m_u$. We run \Cref{thm:GR-density} on the graph $G_u = (V, E \setminus K)$ and halt the algorithm once the number of queries exceeds $\frac{C\hat{m}_u \log n}{\log \hat{m}_u}$. We use $\CC$ queries on $G$ to simulate $\density$ queries on $G_u$. Whenever \Cref{thm:GR-density} queries the number of edges in an induced subgraph $G_u[U]$ for some $U \subseteq V$, we make a $\CC$ query on $G[U]$. Since $G$ is a forest, the number of edges in $G[U]$ equals $|U| - \CC(U)$. Then we can compute the number of edges in $G_u[U]$ by subtracting $|E(G[U]) \cap K|$ from the previous count. Hence, one $\CC$ query on $G$ suffices to simulate one $\density$ query on $G_u$, so the first part of the proposition follows from \Cref{thm:GR-density}. 

Since we always halt the algorithm when \cref{thm:GR-density} makes more than $\frac{C\hat{m}_u \log n}{\log \hat{m}_u} \le \frac{Cm_u \log n}{\log m_u}$ queries, the number of queries is bounded by $O(1+\frac{m_u \log n}{ \log m_u}) \le O(\frac{\max(2, m_u) \log n}{ \log \max(2, m_u)})$, even when $G$ is not a forest. Furthermore, as we can verify in one query whether an edge output by the algorithm actually belongs to $E \setminus K$, the second part of the lemma also follows. Formally, we can iterate over the candidate edges returned by the algorithm, spending one query to verify each and halting as soon as a pair not belonging to $E \setminus K$ is seen. Thus we spend at most $m_u$ queries for the verification and since $\log(m_u) = O(\log n)$, we have $m_u = O(\frac{\max(2, m_u) \log n}{ \log \max(2, m_u)})$, and so this does not dominate the query complexity. \end{proof}

Next, we show a simple algorithm that reconstructs the whoel graph using binary search, but  uses up to poly-logarithmically more queries than the optimal algorithm. 

\begin{lemma} [Reconstructing with Binary Search]
\label[lemma]{lem:GR-binary-search}
There is an algorithm that, given $\CC$-query access to an underlying (unknown) $n$-node graph $G = (V, E)$, the vertex set $V$ and a set of  edges $K$, reconstructs the graph $G$ with $K \subseteq E$ using $O((n+m_u) \cdot \log (\frac{n^2}{n + m_u}))$ queries in the worst case, where $m_u = |E| - |K|$.
\end{lemma}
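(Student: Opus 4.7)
The plan is to reduce the problem to adaptive group testing, applied once per vertex to its unknown neighborhood. First, I would observe that $\CC$-queries can simulate an edge-detection query between a vertex $v \notin S$ and a set $S \subseteq V$: if $a$ denotes the number of connected components of $G[S]$ that $v$ is adjacent to, then $\CC(\{v\}\cup S) = \CC(S) + 1 - a$, so $v$ has at least one neighbor in $S$ if and only if $\CC(\{v\}\cup S) \le \CC(S)$. Hence two $\CC$-queries suffice per edge-detection, and during a binary search we can cache $\CC(S)$ across reused sets so that the amortized cost per edge-detection is $O(1)$ $\CC$-queries.

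Next, for each vertex $v$ I would restrict the search to $R_v := V \setminus (\{v\}\cup N_K(v))$, where $N_K(v)$ is the set of neighbors of $v$ already determined by $K$. Since $K \subseteq E$, every neighbor of $v$ in $R_v$ corresponds to an unknown edge; let $d_v := |N(v) \cap R_v|$. Running standard adaptive group testing on $R_v$ with the edge-detection primitive above (for example, Hwang's generalized binary splitting, or random partition into $\Theta(d_v)$ groups followed by a binary search inside each positive group, combined with a doubling-guess schedule $d_v^\star = 1, 2, 4, \ldots$ when $d_v$ is unknown a priori) recovers all of $N(v) \cap R_v$ using $O(\max(1, d_v)\, (1 + \log(n / \max(1, d_v))))$ $\CC$-queries.

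Summing over $v$, using $\sum_v d_v = 2m_u$ and the concavity of $g(x) = x(1 + \log(n/x))$ on $(0, n]$, Jensen's inequality yields $\sum_v g(d_v) \le n \cdot g(2m_u/n) = O(m_u \log(n^2/m_u))$, where the bound $m_u \le \binom{n}{2}$ keeps the logarithm $\Omega(1)$. Including the $O(n)$ contribution from vertices with $d_v = 0$ gives a total of $O(n + m_u \log(n^2/m_u))$ $\CC$-queries; a short case split on $m_u \le n$ versus $m_u \ge n$ shows this is $O((n + m_u) \log(n^2/(n+m_u)))$, matching the target.

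The main technical obstacle I expect is obtaining the $\log(n/d_v)$ factor (rather than $\log n$) per vertex without knowing $d_v$ in advance; this is standard in adaptive group testing and is handled by the doubling-guess schedule, whose total cost is dominated by the last, largest guess. The $\CC$-to-edge-detection reduction is essentially syntactic, and the Jensen combining step is routine once the per-vertex bound is in hand.
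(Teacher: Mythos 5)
Your proposal is correct and follows essentially the same route as the paper: per-vertex adaptive group testing on the unknown neighborhood $R_v$, an edge-detection primitive implemented via a constant number of $\CC$-queries, and a Jensen step (using concavity of $x\mapsto x\log(n/x)$) to combine the per-vertex bounds into $O((n+m_u)\log(n^2/(n+m_u)))$. The paper's implementation is a bit simpler than the one you sketch — it just recursively halves $R_v$, terminating branches that test negative, which automatically yields the $O(\max(1,d_v)\log(2n/\max(1,d_v)))$ per-vertex cost without needing Hwang's splitting or a doubling-guess schedule.
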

\begin{proof}
    For each vertex $v \in V$, let $U_v$ be the set of vertices in $V \setminus \{v\}$ that do not yet have an edge in $K$ with $v$. Note that for any subset $U \subseteq V \setminus \{v\}$, there is at least one edge between $v$ and $U$ if and only if $\CC(U) = \CC(U \cup \{v\})$. Therefore, we can perform a binary-search style algorithm to find all neighbors of $v$ in $U_v$. More specifically, we use the above method to check whether $v$ has at least one edge with $U_v$. If not, we halt and report no edge. Otherwise, we split $U_v$ into two equal halves (up to $1$ vertex), and recursively solve the problem in the two halves (unless $|U_v| = 1$, in which case we can simply report this edge). Let $d = \max(1, \deg(v, U_v))$. The total number of subsets we query with size at least $n / d$ is $O(d)$, as there are only $O(d)$ of them in the recursion process; for smaller subsets, the binary search costs $O(\log(n/d) + 1) = O(\log(2d/d))$ queries. Hence, the overall number of queries needed is $O(d \log(2n/d)) = O\left(\max(1, \deg(v, U_v)) \cdot \log \left(\frac{2n}{\max(1, \deg(v, U_v))}\right)\right)$.

    Summing over all vertices, the total number of queries is 
    \[
    O\left(\sum_{v \in V} \max(1, \deg(v, U_v)) \cdot \log \left(\frac{2n}{\max(1, \deg(v, U_v))}\right) \right). 
    \]
    By Jensen's inequality, the above can be upper bounded by 
    \[
    O\left(n\cdot \left(1+\frac{2m_u}{n}\right) \cdot \log \left(\frac{2n}{1+\frac{2m_u}{n}}\right)  \right) =  O\left((n+m_u) \cdot \log \left( \frac{n^2}{n + m_u} \right)\right).
    \]
\end{proof}

The next lemma shows an algorithm that finds the neighbors of some vertex $v$ in a subset of vertices $U$, when all edges in the induced subgraphs of $U$ are already known. 

\begin{lemma}
\label[lemma]{lem:GR-find-neighbors-IS}
There is an algorithm that, given $\CC$-query access to an underlying (unknown) $n$-node graph $G = (V, E)$, the vertex set $V$, a subset of vertices $U \subseteq V$, all edges between vertices in $U$, an upper bound $D$ on the maximum degree of the induced subgraph on $U$, and a vertex $v \in V \setminus U$, finds all edges between $v$ and $U$ using 
\[
O\left((D + \deg(v, U)) \cdot  \frac{\log (|U| / D + 1)}{ \log(\max(2, \deg(v, U) / D))}\right)\]
$\CC$-queries in the worst case. 
\end{lemma}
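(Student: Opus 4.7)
The plan is to reduce the task to Lemma~\ref{lem:GR-forest} by partitioning $U$ into independent sets of $G[U]$. The key observation is that if $T \subseteq U$ is an independent set of $G[U]$, then $T$ is also independent in $G$, so the induced subgraph $G[\{v\} \cup T]$ is the star centered at $v$ with leaves $N_G(v) \cap T$. This is a forest, so Lemma~\ref{lem:GR-forest} with $K = \emptyset$ applies and reconstructs all of $v$'s neighbors in $T$.

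As the first step I would compute a partition $U = T_1 \sqcup \cdots \sqcup T_k$ such that each $T_i$ is an independent set of $G[U]$, each $|T_i| \leq \lceil |U|/(D+1)\rceil$, and $k \leq 2(D+1)$. Since every vertex of $G[U]$ has degree at most $D$, a greedy coloring produces at most $D+1$ independent color classes; I would subdivide each class arbitrarily into chunks of size at most $\lceil|U|/(D+1)\rceil$, and a direct count (using $|U|/\lceil|U|/(D+1)\rceil \leq D+1$) shows the total number of chunks is at most $2(D+1)$. This step uses no $\CC$-queries because the edges of $G[U]$ are already known.

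Next, for each $T_i$ I would invoke Lemma~\ref{lem:GR-forest} with vertex set $\{v\} \cup T_i$ and $K = \emptyset$, targeting the star whose edges are $\{\{v,u\} : u \in N_G(v) \cap T_i\}$. Because $T_i$ is independent in $G$, for any $S \subseteq \{v\} \cup T_i$ the edges of $G[S]$ equal those of that star restricted to $S$, so $\CC_G(S)$ correctly answers every query required by Lemma~\ref{lem:GR-forest}. Writing $d_i := \deg_G(v, T_i)$, each invocation costs
\[
O\!\left(\frac{\max(2, d_i)\,\log(|T_i|+1)}{\log \max(2, d_i)}\right) = O\!\left(\frac{\max(2, d_i)\,\log(|U|/D+1)}{\log \max(2, d_i)}\right)
\]
$\CC$-queries, and the union of the outputs recovers $N_G(v) \cap U$.

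The hard part will be summing these costs to match the target $O((D+d)\log(|U|/D+1)/\log \max(2, d/D))$ with $d = \sum_i d_i = \deg_G(v, U)$. I would split the indices into those with $d_i \leq e^2$ and those with $d_i > e^2$. The former group has at most $k = O(D)$ indices each contributing $O(\log(|U|/D+1))$, giving $O(D \log(|U|/D+1))$ in total. For the latter, since $f(x) = x/\log x$ is concave on $[e^2, \infty)$, Jensen's inequality bounds their contribution by $O(d \log(|U|/D+1)/\log(d/k)) = O(d \log(|U|/D+1)/\log(d/D))$, using $k = O(D)$. Combining the two contributions and using the elementary inequality $D\log(d/D) \leq d$ (valid for $d \geq D$) to show the first term is absorbed when $d \gg D$, together with the observation that $\log \max(2, d/D) = \Theta(1)$ in the remaining regime $d = O(D)$, yields the claimed bound in both regimes.
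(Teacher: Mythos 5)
Your proposal follows essentially the same route as the paper's own proof: color $G[U]$ with $D+1$ colors, subdivide each color class into pieces of size $O(|U|/D)$, run Lemma~\ref{lem:GR-forest} on each piece together with $v$ (correctly observing the induced graph is a star, hence a forest), and then combine the per-piece costs via concavity of $x/\log x$ and Jensen. The only cosmetic difference is that you handle the small-$d_i$ and large-$d_i$ pieces in two explicit cases, whereas the paper folds both into a single Jensen application by replacing $\max(2,d_i)$ with $e^2 + d_i$; both are standard and equivalent.
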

\begin{proof}
Since the maximum degree of $G[U]$ is at most $D$, we can color $G[U]$ using $D+1$ colors. Then each color class forms an independent set and we denote the independent sets by $I_1, \ldots, I_{D+1}$. If some independent set has size greater than $|U| / D$, we arbitrarily decompose the independent set to smaller ones of size $O(|U| / D)$. The total number of independent sets is still $\ell = O(D)$. %
Between every independent set $I_i \subseteq U$ and $v$, we run \Cref{lem:GR-forest} on the subgraph induced on $I_i \cup \{v\}$. As $I_i$ is an independent set, this induced subgraph is a forest, so \Cref{lem:GR-forest} finds all edges between $v$ and $I_i$ in 
\[
O\left(\frac{\max(2, \deg(v, I_i)) \cdot \log (|U| / D + 1)}{\log \max(2, \deg(v, I_i))} \right)
\]
queries. By Jensen's inequality, the total number of queries is hence %
\begin{align*}
    & O\left(\sum_{i=1}^{\ell} \frac{\max(2, \deg(v, I_i)) \log (|U|/ D + 1)}{\log \max(2, \deg(v, I_i))}\right).\\
\end{align*}
By upper bounding $\max(2, \deg(v, I_i))$ by $e^2+\deg(v, I_i)$ and use the fact that the function $x/\log x$ is concave for $x > e^2$ and apply Jensen's inequality, the above can be bounded by
\begin{align*}
& O\left(\ell \cdot \left( e^2+\deg(v, U) / \ell\right) \cdot \frac{\log (|U|/D + 1)}{\log(e^2+\deg(v, U) / \ell)}\right) \\
     & = O\left((D+ \deg(v, U)) \cdot \frac{\log (|U| / D + 1)}{\log(\max(2, \deg(v, U) / D)))}\right).\\
\end{align*}
\end{proof}

The next lemma shows an algorithm that finds all vertices adjacent to a subset of vertices $S$. 

\begin{lemma}
\label[lemma]{lem:find-adjacent-to-S}
There is an algorithm that, given $\CC$-query access to an underlying (unknown) $n$-node graph $G = (V, E)$, the vertex set $V$ and a subset of vertices $S \subseteq V$, outputs all vertices in $V$ with at least one edge connected to some vertex in $S$, using $O((\sum_{s \in S} \deg(s)) \log n)$ $\CC$-queries in the worst case. 
\end{lemma}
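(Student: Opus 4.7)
The plan is to perform a binary search over $V$ to locate the target vertices, using the $\CC$-oracle to test whether a given subset contains any vertex adjacent to $S$. Because the output set may include both vertices of $V \setminus S$ and vertices of $S$ (the latter when they have a neighbor in $S \setminus \{v\}$), I handle the two cases separately: Part~(A) finds vertices in $V \setminus S$ adjacent to $S$, and Part~(B) finds non-isolated vertices of $G[S]$.

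The key identities that make the binary search go through are these. For disjoint $T$ and $S$, $\CC(T \cup S) = \CC(T) + \CC(S)$ iff no edge of $G$ crosses between $T$ and $S$; this is immediate. For $T \subseteq S$, $\CC(S \setminus T) + |T| = \CC(S)$ iff every vertex of $T$ is isolated in $G[S]$; this can be seen by building $G[S]$ from $G[S \setminus T]$ by inserting the vertices of $T$ one at a time and noting that each insertion changes the component count by at most $+1$, with equality precisely when the inserted vertex has no neighbor among the already-present vertices. Once $\CC(S)$ is precomputed with one query, each ``does this subset contain a target?'' test then costs only $O(1)$ new $\CC$-queries in both parts.

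For Part~(A), start with $T = V \setminus S$ and test whether $T$ contains a vertex adjacent to $S$; if so, partition $T$ into two halves of nearly equal size and recurse on each, halting at singletons. For Part~(B), perform the analogous recursion starting with $T = S$, using the second identity. In both cases only subsets containing at least one target vertex are explored. Letting $A$ (resp.\ $B$) denote the target set of Part~(A) (resp.\ Part~(B)), the recursion tree has at most $|A|$ (resp.\ $|B|$) successful leaves, depth $O(\log n)$, and thus $O(|A| \log n)$ (resp.\ $O(|B| \log n)$) total nodes, each costing $O(1)$ queries.

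It only remains to bound $|A| + |B|$. Since $\sum_{s \in S} \deg(s) = |E(S, V \setminus S)| + 2\,|E(S, S)|$, while $|A| \le |E(S, V \setminus S)|$ and $|B| \le 2\,|E(S, S)|$ trivially, we obtain $|A| + |B| \le \sum_{s \in S} \deg(s)$, giving the claimed $O\bigl((\sum_{s \in S} \deg(s)) \log n\bigr)$ query complexity. The one nontrivial step I expect is justifying the second $\CC$-identity (for $T \subseteq S$); everything else is a standard binary-search complexity analysis, so no serious obstacles are anticipated.
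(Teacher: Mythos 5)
Your proof is correct and follows essentially the same strategy as the paper's: a binary search over candidate vertices with an $O(1)$-$\CC$-query test for whether a given subset contains any target. The only difference is organizational --- the paper runs a single unified recursion over arbitrary $U \subseteq V$ using the test $\CC(S \setminus W) + \CC(W) + \CC(U \setminus W) = \CC(S \cup U)$ and $\CC(W)=|W|$ where $W = S \cap U$, whereas you split into two separate recursions over $V \setminus S$ and over $S$, trading a single slightly more involved test for two simpler ones; both yield the same $O\bigl((\sum_{s \in S}\deg(s))\log n\bigr)$ bound.
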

\begin{proof}
    Given an arbitrary subset $U \subseteq V$, we can determine whether there is an edge between $S$ and $U$ using $O(1)$ $\CC$ queries as follows. Let $W = S \cap U$. Then there is no edge between $S$ and $U$ if and only if 
    \[
        \CC(S \setminus W) + \CC(W) + \CC(U \setminus W) = \CC(S \cup U) \text{    and    } \CC(W) = |W|. 
    \]
    
    Therefore, we can find all vertices adjacent to $S$ using a binary-search approach. Start by setting $U = V$. If we determine (using $O(1)$ queries as described in the previous paragraph) that there is no edge between $S$ and $U$, we do not need to do any work. Otherwise, if $|U| = 1$, we output the vertex in $U$ as one vertex adjacent to $S$. Otherwise, we arbitrary split $U$ to two halves $U_1$ and $U_2$, and recursively find all neighbors of $S$ in $U_1$ and $U_2$. It is not difficult to bound the number of queries of this procedure by $O((\sum_{s \in S} \deg(s)) \log n)$. 
\end{proof}

The final lemma in this section shows an algorithm that finds all vertices whose degrees are approximately equal to some given parameter $T$. 
\begin{lemma}
\label[lemma]{lem:find-high-degree-nodes}
There is an algorithm that, given $\CC$-query access to an underlying (unknown) $n$-node graph $G = (V, E)$, the vertex set $V$, an upper bound $m$ on the number of edges, an integer parameter $T \ge 10$, and a parameter $0 < \delta < 1$, outputs a set $H \subseteq V$ using $O(\frac{m (\log m + \log \frac{1}{\delta}) \log n}{T})$ $\CC$-queries in the worst case. With probability $\ge 1 - \delta$, 
\begin{itemize}
    \item for every $v \in V$ with $\deg(v) \ge 2T$, $v \in H$;
    \item for every $v \in V$ with $\deg(v) \le T / 2$, $v \not \in H$. 
\end{itemize}
\end{lemma}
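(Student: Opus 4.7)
The plan is a sample-and-detect algorithm. Sample $k = \Theta(\log m + \log(1/\delta))$ independent subsets $S_1, \dots, S_k \subseteq V$ by including each vertex in each $S_i$ with probability $1/T$, then use \Lem{find-adjacent-to-S} to compute $A_i = \{v \in V : v \text{ has a neighbor in } S_i\}$. For each $v$ set $N_v = |\{i : v \in A_i\}|$ and output $H = \{v : N_v \geq \alpha k\}$ for a constant threshold $\alpha$ to be chosen.

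For a vertex of degree $d$, $\Pr[v \in A_i] = 1 - (1-1/T)^d$, which (using $T \geq 10$) is at least $1 - e^{-2} \approx 0.865$ when $d \geq 2T$ and at most $1 - e^{-1/2+o(1)} \approx 0.41$ when $d \leq T/2$. I would pick $\alpha$ strictly between these two values, so that on the high-degree side a Chernoff bound gives $\Pr[N_v < \alpha k] \leq e^{-\Omega(k)}$. Since there are at most $m/T$ vertices with $\deg(v) \geq 2T$, taking $k = \Theta(\log m + \log(1/\delta))$ and union bounding handles false negatives with probability at most $\delta/2$.

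The query complexity is straightforward: \Lem{find-adjacent-to-S} gives $O((\sum_{s \in S_i} \deg(s)) \log n)$ queries per iteration, which has expectation $2m \log n / T$, so the total expected cost is $O(\frac{m(\log m + \log(1/\delta)) \log n}{T})$. A standard Markov-based abort-and-retry argument (terminate any iteration whose cost exceeds, say, four times its expectation and restart, incorporating the constant failure probability into $\delta$) converts this into the claimed worst-case bound.

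The main obstacle is controlling false positives without paying a $\log n$ factor from union-bounding over all (up to $n$) low-degree vertices. The fix is to impose the additional constraint $H(\alpha) < \alpha$ on the threshold, where $H(\cdot)$ is the binary entropy; e.g.\ $\alpha = 0.8$ works since $H(0.8) \approx 0.72$. Then use a moment argument: from $\Pr[v \in A_i] \leq \deg(v)/T$ one gets $\Pr[N_v \geq \alpha k] \leq \binom{k}{\alpha k}(\deg(v)/T)^{\alpha k}$; summing over all $v$ with $\deg(v) \leq T/2$ and using the handshake-type bound $\sum_{v : d_v \leq T/2} d_v^{\alpha k} \leq (T/2)^{\alpha k - 1} \sum_v d_v \leq 2m(T/2)^{\alpha k - 1}$ yields a total false-positive probability of order $\binom{k}{\alpha k} \cdot 2^{-\alpha k} \cdot (m/T) = O\bigl(\tfrac{m}{T} \cdot 2^{(H(\alpha) - \alpha) k}\bigr)$, which is geometrically decreasing in $k$. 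Thus $k = \Theta(\log m + \log(1/\delta))$ drives this below $\delta/2$, closing the argument.
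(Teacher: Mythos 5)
Your high-level plan --- sample $\Theta(\log m + \log(1/\delta))$ subsets at rate $1/T$, detect which vertices see a neighbor in each subset via \Cref{lem:find-adjacent-to-S}, and threshold the fraction of detections --- is the same as the paper's. But there is a genuine gap in how your worst-case cost control interacts with the false-negative analysis, and you have made the false-positive analysis harder than it needs to be.

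The gap: your false-negative bound for high-degree vertices uses $\Pr[v\in A_i]\geq 1-e^{-2}\approx 0.865$, but this does not account for the iterations you discard because they exceed the cost budget. With a cap at $4\times$ the expected cost, Markov only gives abort probability $\leq 1/4$, so the effective per-iteration detection probability can drop to about $0.615$, which is \emph{below} your threshold $\alpha=0.8$ --- the Chernoff bound now points the wrong way. And if ``restart'' means resample until under budget, the worst-case query count is unbounded, and the sample becomes distributed conditionally on having low cost, which perturbs both $\Pr[v\in A_i]\geq 1-e^{-2}$ and $\Pr[v\in A_i]\leq \deg(v)/T$ and is not addressed. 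The paper's cleaner choice is to halt \Cref{lem:find-adjacent-to-S} once it passes the budget and treat that iteration as a miss for every vertex, with the budget set to $10\times$ the expectation so that the detection probability for a vertex of degree $\geq 2T$ stays at least $1-e^{-2}-1/10 \geq 0.7$, comfortably above the paper's threshold of $\alpha=1/2$.

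The over-complication: you worry about paying $\log n$ to union bound over up to $n$ low-degree vertices, and you introduce the moment/entropy argument with the constraint $H(\alpha)<\alpha$ to avoid it. But a degree-$0$ vertex is never returned by \Cref{lem:find-adjacent-to-S} in any iteration, so only vertices with nonzero degree can ever enter $H$, and there are at most $2m$ of those. A union bound over $2m$ vertices costs only $\log m$, which is already in $k$. That lets you set $\alpha=1/2$ and use a plain two-sided Chernoff bound, as the paper does. Note also that the two pieces of your argument pull against each other: $H(\alpha)<\alpha$ forces $\alpha>1/2$ up toward $0.8$, which is exactly what makes the abort issue above hard to patch without an aggressively large budget. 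Once you make the degree-$0$ observation, both the entropy constraint and the tension disappear.
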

\begin{proof}
The algorithm is described in \Cref{alg:find-high-degree-nodes}. 
\begin{algorithm}[ht]
\caption{Algorithm for finding high-degree vertices} \label{alg:find-high-degree-nodes}
\KwIn{$V, m, T, \delta$ and $\CC$-query access to a graph $G=(V, E)$ with $n$ nodes and at most $m$ edges.}
\KwOut{A subset $H \subseteq V$ with the property described in \cref{lem:find-high-degree-nodes}.} 
$\ell \gets \lceil 200(\ln(2m) + \ln(1/\delta)\rceil$\; 
Initialize an array $a$ indexed on $V$ initially all $0$\; 
\For{$i = 1$ to $\ell$}{
Sample a subset $S \subseteq V$ where each $v \in V$ is added to $S$ independently with probability $\frac{1}{T}$\; 
Run \Cref{lem:find-adjacent-to-S} with $V, S$ to get a set $U \subseteq V$ but halt \Cref{lem:find-adjacent-to-S} when the number of queries exceeds $O(\frac{20m}{T} \log n)$ where the constant factor hidden in $O$ is the same as the constant factor in \Cref{lem:find-adjacent-to-S}\; 
\If{\Cref{lem:find-adjacent-to-S} finishes}
{
\For{$v \in U$}
{
$a_v \gets a_v + 1$\;
}
}
}
\Return $H = \{v \in V: a_v \ge \ell / 2\}$\;
\end{algorithm}

First, we analyze the number of queries used by \Cref{alg:find-high-degree-nodes}. Each iteration of the for-loop uses $O(\frac{m \log n}{T})$ queries, so the overall number of queries is $O(\frac{m \ell \log n}{T}) = O(\frac{m (\log m + \log \frac{1}{\delta}) \log n}{T})$.  

Next, we analyze the error probability. For every $v \in V$ with $\deg(v) \ge 2T$ and for some iteration $i$ of the algorithm, let $\cE_1$ be the event that the sample $S$ is disjoint from the neighbors of $v$. We have that 
\[
\prob[\cE_1] \le  \left(1-\frac{1}{T}\right)^{2T} \le e^{-2} \le 0.2. 
\]
Furthermore, let $\cE_2$ be the event that $\sum_{s \in S} \deg(s) \ge \frac{20 m}{T}$. It is not difficult to see that 
\[
\EX\left[\sum_{s \in S} \deg(s)\right] \le \frac{2m}{T}. 
\]
Therefore, by Markov's inequality, $\prob[\cE_2] \le 0.1$. As long as neither $\cE_1$ nor $\cE_2$ happens, $v$ will be among the vertices returned by \Cref{lem:find-adjacent-to-S}, and thus $a_v$ will be incremented in the iteration. Hence, $a_v$ is incremented in each iteration with probability $\ge 1 - \prob[\cE_1 \vee \cE_2] \ge 0.7$. By Chernoff bound, at the end of the algorithm,
\[
\prob\left[a_v \le \ell / 2\right] \le \prob\left[a_v \le (1 - 0.2 / 0.7) \EX[a_v]\right] \le \exp(-(0.2 / 0.7)^2 \cdot (0.7 \ell) / 2) \le \frac{\delta}{2m}. 
\]
In other words, $v$ will be output by the algorithm with probability $\ge 1 - \frac{\delta}{2m}$. 

Next, we consider vertices $v \in V$ with $\deg(v) \le T / 2$. For any iteration $i$, the probability that $S$ is disjoint from the neighbors of $v$ is (recall $T \ge 10$)
\[
\ge \left(1-\frac{1}{T} \right)^{T/2} \ge \left(1-\frac{1}{10} \right)^{10/2} \ge 0.59. 
\]
Therefore, the probability that $a_v$ is incremented in the iteration is $\le 0.41$. Thus, by Chernoff bound, at the end of the algorithm, 
\[
\prob\left[a_v \ge \ell / 2\right] \le \exp(-(0.09 / 0.41)^2 \cdot (0.41 \ell) / (2 + (0.09 / 0.41))) \le \frac{\delta}{2m}. 
\]

Furthermore, notice that the algorithm never outputs a vertex $v$ with degree $0$, since $a_v$ is always $0$ for such vertices. 

Finally, the lemma follows by union bound over all vertices whose degree is nonzero (there can be at most $2m$ such vertices). 
\end{proof}

\section{Adaptive Algorithm}

\subsection{Technical Overview for the Adaptive Algorithm} \label{sec:technical-overview}
In this section, we describe the high-level ideas for our optimal adaptive algorithm. For simplicity, we only consider the case $m \ge n$ in the overview, so that the number of queries we aim for is $O(\frac{m \log n}{\log m}) = O(m)$. 

Our key observation is that, when the input graph is a forest, we  could use one $\CC$ query to efficiently simulate one $\density$ query. In fact, when the input graph $G=(V, E)$ is a forest, any induced subgraph $G[U]$ for $U \subseteq V$ is also a forest, so we have $\density(U) = |U| - \CC(U)$. Hence, we could use $O(m)$ $\CC$-queries to simulate \cref{thm:GR-density} for graph reconstruction which uses $O(\frac{m \log n}{\log m}) = O(m)$ $\density$ queries. 

Hence, a natural next step is to utilize this efficient algorithm on forests, and generalize it to arbitrary graphs. We develop two main methods in order to achieve this goal: 

\paragraph{Vertices with similar degrees. } First, let us make a simplifying assumption that all vertices have similar degrees, i.e., all vertices have degrees $O(D)$ where $D = m / n$.\footnote{We set the maximum degree to be $O(m/n)$ for simplicity in this overview. In general, our algorithm also works for certain larger maximum degree. } A natural idea to obtain forests from general graphs is to sample random subgraphs. Let $0 < p < 1$ be a sample probability whose value will be determined later. Suppose we sample a random subset $S \subseteq V$ where every vertex $v \in V$ is added to $S$ with probability $p$, what is the probability that $G[S]$ is a forest? To lower bound this probability, we instead consider the probability that $G[S]$ is a matching, i.e., the maximum degree of $G[S]$ is at most $1$. Consider any vertex $u \in V$ and two of its neighbors $v, w$. The probability that they are all added in $S$ is $p^3$. The total number of such pairs is $O(mD)$, because there are $O(m)$ pairs $(u, v)$, and for fixed $u$, it can have at most $O(D)$ neighbors $w$. Hence, by union bound, the probability that any vertex in $G[S]$ has degree at least $2$ is $O(p^3 m D)$. By setting $p=\Theta((mD)^{-1/3})$ with small enough constant, we obtain that $G[S]$ is a forest with constant probability. 

However, it is not enough that $G[S]$ is a forest. Recall that the number of queries we spend on a forest with $n'$ vertices and $m'$ edges is $O(\frac{m' \log n'}{\log m'})$, which is only linear in the number of edges when $m' = (n')^{\Omega(1)}$. The expected number of vertices in $G[S]$ is $np$, and the expected number of edges is $mp^2$. Hence, we intuitively need $mp^2 = (np)^{\Omega(1)}$. By plugging in $p=\Theta((mD)^{-1/3})$ and $m = \Theta(nD)$, we get $mp^2 = \Theta((n/D)^{1/3})$ and $np = \Theta(n/D)^{2/3})$, so we indeed have $mp^2 = (np)^{\Omega(1)}$ as desired. 

The above random sampling idea leads to a procedure that recovers $\Theta(mp^2)$ edges in expectation, using $O(mp^2)$ queries. When a large fraction of edges in the graphs are discovered, we also expect that a large fraction of edges in $G[S]$ are discovered. Hence, we use a version of the algorithm for forests that does not waste queries on these already discovered edges. Eventually, the expected number of undiscovered edges in the sampled subgraph $G[S]$ might go below $(np)^{\Omega(1)}$ (for instance, when there is only one undiscovered edge in the graph), so the query complexity might become superlinear in the number of undiscovered edges once when we have discovered many edges. Despite this, we can still show that the overall expected number of queries is $O(m)$.

\paragraph{Vertices with different degrees. } To complement the above algorithm, we design the following algorithm that works when a vertex has degree much larger than some of its neighbors. Suppose $U \subseteq V$ is a subset where the maximum degrees of vertices in $U$ is $d$ and suppose we have found all edges in $G[U]$, and let $v \in V \setminus U$ be a vertex whose degree is $D \gg d$, say $D \ge d \cdot n^{0.01}$. Our algorithm will find all neighbors of $v$ in $U$ using $O(D)$ $\CC$-queries. 

Because the maximum degree of $G[U]$ is $d$, we can color the vertices of it using $d+1$ colors, which further means that we can partition $U$ into $d+1$ sets $W_1, \ldots, W_{d+1}$ so that each set is an independent set. Thus, $G[W_i \cup \{v\}]$ is a forest for every $1 \le i \le d + 1$, and we can apply the $CC$-query algorithm for a forest on it. The total query complexity is thus 
\[
O\left(\sum_{i=1}^{d+1} \deg(v, W_i) \cdot \frac{\log (|W_i + 1|^2 / \deg(v, W_i))}{\log \deg(v, W_i)}\right) = O\left(\log n \cdot \sum_{i=1}^{d+1} \frac{\deg(v, W_i)}{\log \deg(v, W_i)}\right). 
\]
Assume for simplicity that $\deg(v, W_i) \ge e^2$ for every $i$, and use the concavity of $x / \log x$ for $x > e^2$, the above can be bounded by the following using Jensen's inequality:
\[
O\left(\log n \cdot (d+1) \cdot \frac{D / (d+1)}{\log (D / (d+1))}\right) = O\left(\log n \cdot \frac{D}{\log (n^{0.01})}\right) = O(D), 
\]
as desired. 

\paragraph{Final algorithm. } The final algorithm is an intricate combination of the above two ideas. As a very high-level intuition, we carefully set up several thresholds that partition vertices to $S_1, \ldots, S_\ell$ based on their degrees. Then we use the first algorithm that works for vertices with similar degrees to reconstruct edges in $G[S_i \cup S_{i+1}]$ for $1 \le i < \ell$, and we use the second algorithm that works for vertices with very different degrees to reconstruct edges in between $S_i$ and $S_j$ for $i < j - 1$.

\subsection{A Key Subroutine}
\label{sec:key-subroutine}

In this section, we describe a key subroutine that will be used in the adaptive algorithm. 

\begin{lemma}
\label[lemma]{lem:key-subroutine}
    There is a $\CC$-query algorithm that, given $\CC$-query access to an underlying (unknown) $n$-node graph $G = (V, E)$, the vertex set $V$, an upper bound $m$ on the number of edges, an upper bound $D$ on the maximum degree, and an integer parameter $\ell \ge 0$, returns a subset $K$ of all edges $E$  so that (in the following, $p = \frac{1}{10m^{1/3}D^{1/3}}$)
    \begin{enumerate}
        \item The expected number of remaining edges $|E \setminus K|$ is upper bounded by $|E| (1-p/2)^\ell$. 
        \item The expected number of $\CC$-queries used by the algorithm is 
        \[
        O\left(\left(\frac{|E|}{\log \max(2, p^2|E|)} + \ell\right)\cdot \log(100 pn)\right) \text{.}
        \]
    \end{enumerate}
\end{lemma}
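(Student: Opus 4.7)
The plan is to run $\ell$ iterations of a sampling procedure. In iteration $i$, sample $S \subseteq V$ by including each vertex independently with probability $p = 1/(10 m^{1/3} D^{1/3})$; then apply \cref{lem:GR-forest} to $G[S]$ with $K \cap \binom{S}{2}$ passed as the already-known edges (where $K$ is the running set of discovered edges, starting at $\emptyset$), appending any newly returned edges to $K$. The second guarantee of \cref{lem:GR-forest} ensures $K \subseteq E$ throughout, regardless of whether $G[S]$ is actually a forest in any given iteration.

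The key structural property driving both the decay and query bounds is that $G[S]$ is a matching (and hence a forest) with constant probability bounded away from $0$. By a union bound over triples $(u; v, w)$ with $u$ adjacent to both $v$ and $w$, the expected number of vertices in $S$ having at least two neighbors in $S$ is at most $p^3 \sum_u \binom{\deg(u)}{2} \le \tfrac{1}{2} p^3 m D$, which is $\le 1/200$ by the choice of $p$. When $G[S]$ is a matching, \cref{lem:GR-forest} recovers every edge of $G[S]$ correctly. To obtain the claimed decay factor $(1-p/2)$---rather than the naive $(1-\Theta(p^2))$ obtained by requiring both endpoints of an edge to lie in $S$---the plan is to additionally exploit $S$ to recover edges with at least one endpoint in $S$, an event of probability $1 - (1-p)^2 \ge p$ per edge. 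Concretely, once $G[S]$ has been reconstructed its maximum degree is known to be at most $1$, so \cref{lem:GR-find-neighbors-IS} (with $U = S$ and degree bound $D = 1$) efficiently finds all edges from any $v \notin S$ into $S$, and \cref{lem:find-adjacent-to-S} is used first to prune $V \setminus S$ to the vertices actually adjacent to $S$. This yields per-edge discovery probability $\ge p/2$, and linearity of expectation gives $\mathbb{E}[|E \setminus K|] \le |E|(1-p/2)^\ell$.

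For the query bound, each iteration decomposes into (i) a baseline $O(\log|S|)$ from the initial $\CC(S)$ query inside \cref{lem:GR-forest}, contributing the additive $O(\ell \log(100 pn))$ term across iterations after controlling $\mathbb{E}[\log|S|]$ by a Chernoff/Markov argument; and (ii) a ``heavy'' cost proportional to $m_u^{(i)} \log|S|/\log\max(2, m_u^{(i)})$, where $m_u^{(i)}$ is the number of unknown edges inside $G[S]$ at iteration $i$, with $\mathbb{E}[m_u^{(i)}] \le p^2 \mathbb{E}[|E \setminus K_{i-1}|]$. Since the latter decays like $(1-p/2)^{i-1}|E|$ by the first part, the heavy contributions form a geometric series; combined with concavity of $x/\log x$ on $x \ge e^2$ and Jensen's inequality, the telescoping sum collapses to the first claimed term $O(|E|\log(pn)/\log(p^2|E|))$.

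The principal obstacle will be reconciling the $(1-p/2)$ decay---which appears to require recovering edges incident to $S$, not merely within $G[S]$---against the stated query budget: a naive call to \cref{lem:find-adjacent-to-S} costs $O(pm \log n)$ per iteration, which is too expensive when summed over $\ell$. The fix is to charge queries spent at each $v \notin S$ against $\deg(v, S)$ rather than $\deg(v)$, ensuring the amortized per-iteration cost is proportional to the number of edges actually discovered. A secondary delicate point is handling the $\le 1/200$ probability event that $G[S]$ fails to be a matching; this does not break correctness but must be verified to absorb cleanly into multiplicative constants in the expected query count.
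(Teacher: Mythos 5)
The paper's proof of this lemma establishes the decay bound $\EX[|E\setminus K|]\le |E|(1-p^2/2)^\ell$, not $(1-p/2)^\ell$; both corollaries that invoke it (\cref{cor:key-subroutine-variant1} and \cref{cor:key-subroutine-variant2}) explicitly use $(1-p^2/2)^\ell$, so the exponent $p/2$ in the statement is a typo for $p^2/2$. You correctly recognized that the paper's algorithm (sample $S$ with rate $p$, reconstruct $G[S]$ via \cref{lem:GR-forest}) only gives $(1-\Theta(p^2))$ decay per round, but then you tried to actually hit the (erroneous) $(1-p/2)$ target by also recovering edges with exactly one endpoint in $S$. This additional machinery does not fit in the stated query budget, and your own ``fix'' does not rescue it.

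Concretely: a single call to \cref{lem:find-adjacent-to-S} costs $O((\sum_{s\in S}\deg(s))\log n)$ queries, which is $\Theta(pm\log n)$ in expectation per iteration; the subsequent per-vertex calls to \cref{lem:GR-find-neighbors-IS} with $U=S$, $D=1$ cost $\Omega(\sum_{v\notin S}\max(1,\deg(v,S)))$, which is also $\Omega(pm)$ in expectation. Over $\ell$ iterations (with $\ell=\Theta(\log(\cdot)/p^2)$ in the corollaries) this is $\Omega(m\log(\cdot)/p)$, vastly exceeding the budget of $O\bigl(|E|\log(100pn)/\log\max(2,p^2|E|)+\ell\log(100pn)\bigr)$. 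Your amortization idea (``charge queries at $v$ against $\deg(v,S)$, proportional to edges actually discovered'') cannot close this gap because neither \cref{lem:find-adjacent-to-S} nor \cref{lem:GR-find-neighbors-IS} accepts a set of already-known edges, so they happily re-pay for already-discovered edges in every round, and $\sum_{v}\deg(v,S)$ is $\Theta(pm)$ per round regardless of how many of those edges are new. In short, the extra per-iteration cost is $\Theta(pm\log(pn))$ while the budget only tolerates the $\Theta(p^2 m\log(pn))$-ish cost of reconstructing $G[S]$ itself, so the approach breaks by a factor of $1/p$.

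A secondary omission: you do not condition on $|S|\le 100pn$ before invoking \cref{lem:GR-forest}. The query cost per round is a product of a (size-dependent) log factor and the number of unknown edges in $G[S]$, two correlated random variables, so you cannot simply bound $\EX[\log|S|]$ separately as you suggest; the paper instead skips the iteration whenever $|S|>100pn$, which cleanly caps the log factor at $\log(100pn)$ while only losing a constant-probability fraction of useful rounds. If you drop the extra one-endpoint machinery and add this truncation, the remaining skeleton of your argument (matching-probability bound via $\sum_u\binom{\deg u}{2}\le mD$, per-edge capture with probability $\Theta(p^2)$, Jensen's inequality on $x/\log x$, and geometric decay of $\EX[m_t]$) matches the paper's proof.
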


The main strategy for the proof of \cref{lem:key-subroutine} is to randomly sample induced subgraphs where each vertex is selected into the subgraph with probability $p$. 
This probability $p = \frac{1}{10m^{1/3}D^{1/3}}$ is chosen so that each sampled induced subgraph is a forest (more specifically, a matching) with good probability, and hence we can use \cref{lem:GR-forest} to reconstruct edges in the subgraph. 

The algorithm is outlined in \Cref{alg:key-subroutine}.

\begin{algorithm}[h]
\caption{A key subroutine} \label{alg:key-subroutine}
\KwIn{(1) $\CC$-query access to graph $G=(V,E)$ with $n$ vertices, at most $m$ edges and maximum degree at most $D$. (2) $V, m, D$ and a parameter $\ell \ge 0$.}
\KwOut{A set of edges $K \subseteq E$. }
$p \gets \frac{1}{10m^{1/3}D^{1/3}}$\;
$K \gets \emptyset$\;
\For{$ t = 1 \to \ell$}{
\label{line:for-loop-t}
Sample $S \subseteq V$ with sample-rate $p$\;
\label{line:sample-S}
\If{$|S| \leq 100 pn$}{
Run \Cref{lem:GR-forest} on $G[S]$ with set of known edges $E(G[S]) \cap K$ to find $F \subseteq E(G[S]) \setminus K$\;
\label{line:solve-forest}
$K \gets K \cup F$\;
}
}
\Return $K$\;
\end{algorithm}

Let $m_t$ be the number of undiscovered edges at the end of iteration $t$ in \Cref{alg:key-subroutine}. Initially, $m_0 = |E|$.

\begin{lemma}
\label[lemma]{lem:expected-undiscovered}
    For any $1 \le t \le \ell$, $\EX[m_t \mid m_{t-1}] \le (1-p^2/2) m_{t-1}$. 
\end{lemma}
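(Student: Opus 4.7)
The plan is to establish the lemma via a per-edge argument: for each undiscovered edge $e = (u, v) \in E \setminus K_{t-1}$, I will show
\[
\Pr[e \in K_t \mid K_{t-1},\, e \notin K_{t-1}] \;\ge\; \tfrac{p^2}{2},
\]
with the probability over iteration $t$'s sample $S$ (which is independent of $K_{t-1}$). Linearity of expectation then gives $\EX[m_t \mid K_{t-1}] \le m_{t-1}(1 - p^2/2)$, and integrating over $K_{t-1}$ given $m_{t-1}$ yields the claim.

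To prove the per-edge bound, fix $e = (u, v)$ and introduce the three events $A = \{u, v \in S\}$, $B = \{|S| \le 100 p n\}$, and $C = \{G[S] \text{ is a matching}\}$ on iteration $t$'s sample. On $A \cap B \cap C$ the subgraph $G[S]$ is a forest containing $e$ and passes the size check, so \Cref{lem:GR-forest} invoked at \Cref{line:solve-forest} outputs every edge of $G[S] \setminus K_{t-1}$ and in particular $e$. Hence it suffices to lower bound $\Pr[A \cap B \cap C] \ge \Pr[A]\bigl(1 - \Pr[\neg B \mid A] - \Pr[\neg C \mid A]\bigr)$ by $p^2/2$, and since $\Pr[A] = p^2$ the task reduces to controlling the two conditional failure probabilities.

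The bound $\Pr[\neg B \mid A] \le 1/50$ is immediate from Markov's inequality applied to $|S \setminus \{u, v\}| \sim \text{Binomial}(n-2, p)$, using that $pn \ge 1/10$ (which follows from the trivial bounds $D \le n-1$ and $m \le \binom{n}{2}$). The main work is in $\Pr[\neg C \mid A]$: $\neg C$ holds iff some ``cherry'' $(x, w, y)$ in $G$ (i.e.\ $wx, wy \in E$, $x \ne y$) satisfies $x, w, y \in S$. I will union-bound over cherries, splitting by the size of $\{x, w, y\} \cap \{u, v\}$:
\begin{itemize}
    \item cherries with no overlap contribute $p^3$ each, and $\sum_w \binom{\deg(w)}{2} \le mD$ gives total $m D p^3 = 10^{-3}$;
    \item cherries with a single overlap contribute $p^2$ each; casework on whether the overlapping endpoint of $e$ is the center or an endpoint of the cherry, together with $\deg(u),\deg(v) \le D$, yields $O(D^2)$ such cherries and total $O(D^2 p^2)$;
    \item cherries containing both $u$ and $v$ contribute $p$ each; the third vertex is forced to be either a common neighbor of $u, v$ or a neighbor of one of them, giving $O(D)$ cherries and total $O(D p)$.
\end{itemize}
Plugging in $p = 1/(10(mD)^{1/3})$ turns the three terms into $10^{-3}$, $O(D^{4/3}/m^{2/3})$, and $O(D^{2/3}/m^{1/3})$, each a small constant in the operative regime of the subroutine, so that $\Pr[\neg C \mid A] \le 1/3$. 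Combining, $\Pr[A \cap B \cap C] \ge p^2(1 - 1/50 - 1/3) \ge p^2/2$, completing the per-edge bound.

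The main obstacle is the cherry enumeration in the $\Pr[\neg C \mid A]$ bound: carefully performing the overlap casework, verifying the $O(D^2)$ and $O(D)$ combinatorial counts, and checking that $p = \Theta((mD)^{-1/3})$ simultaneously tames all three error terms. Conceptually, $p$ is chosen exactly so that the unconditional expected number of cherries in $S$ is $mDp^3 = O(1)$; conditioning on $A$ only introduces lower-order corrections from cherries that overlap $\{u, v\}$, which the scaling of $p$ controls.
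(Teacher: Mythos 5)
Your proof is correct and follows essentially the same approach as the paper's: a per-edge argument with the good event $\{u,v\in S\}\cap\{|S|\le 100pn\}\cap\{G[S]\text{ is a matching}\}$, Markov for the size bound, and a union bound for the matching event, with the same choice of $p$ making the expected cherry count $O(1)$. The only divergence is in how the matching-failure probability is bounded: the paper splits into ``some $z\in N(u)\cup N(v)$ lands in $S$'' (probability $\le 2pD$, since conditioned on $u,v\in S$ the edge $uv$ already supplies the second edge of the path) versus ``some vertex outside $\{u,v\}\cup N(u)\cup N(v)$ has two neighbors in $S$'' (probability $\le 2mDp^3$), which is slightly slicker than your cherry-overlap casework with its $O(D^2)p^2$ and $O(D)p$ terms (both requiring the same implicit assumption $D\le\sqrt m$), but arrives at the same place.
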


\begin{proof}
We say the random set $S$ sampled at \Cref{line:sample-S} of \Cref{alg:key-subroutine} is \emph{good} if the following two events occur.
\begin{itemize}
    \item $\cE_1$: $|S| \leq 100pn$.
    \item $\cE_2$: $G[S]$ is a matching. 
\end{itemize}

\begin{claim} \label{clm:good-event} 
Let $(u, v) \in E \setminus K$ be an arbitrary unknown edge. If $u, v \in S$ and $S$ is good, then the edge $(u, v)$ is found by \Cref{lem:GR-forest} at \Cref{line:solve-forest}. 
\end{claim}

Hence, it suffices to bound the probability of the event above.

\begin{claim} \label{clm:edge-capture}
Let $(u, v) \in E \setminus K$ be an arbitrary unknown edge. Then, $\Pr[u,v \in S \text{ and } S \text{ is good}] \geq p^2/2$. \end{claim}

\begin{proof} First, clearly $\Pr[u,v \in S] = p^2$ and so it suffices to show that $\Pr[S \text{ is good} ~|~ u,v \in S] \geq 1/2$. We will establish this by showing that conditioned on $u,v \in S$, each of the events $\cE_1,\cE_2$ above fail to occur with probability at most $1/4$ and taking a union bound. 

Clearly, $\EX_S[|S| \mid u,v \in S] = p(n-2) + 2 = pn + 2(1-p)$. Since $n \ge n^{1/3} D^{2/3} \ge m^{1/3} D^{1/3}$ (we have $m \le nD$ as $D$ is an upper bound on the max-degree), we have $pn \ge 1/10$ and thus $pn + 2(1-p) \le pn + 2 \le 21 pn$. By Markov's inequality, $\prob[\neg \cE_1 \mid u, v \in S] \le \frac{21 pn}{100 pn} \le 1/4$. 

We now prove the desired bound for $\cE_2$. Note that $G[S]$ is \emph{not} a matching iff either (a) some vertex $z \in N(u) \cup N(v)$ appears in $S$, or (b) some vertex $w \in V \setminus (\{u, v\} \cup N(u) \cup N(v))$ appears in $S$ and has at least two neighbors appearing in $S$. Since $|N(u) \cup N(v)| \le 2D$, the probability of the former occurring is at most $2 p D = \frac{1}{5} \cdot \frac{D^{2/3}}{m^{1/3}} \le \frac{1}{5}$ (recall $D \le \sqrt{m}$). For the latter, fix $w \notin \{u, v\} \cup N(u) \cup N(v)$, and two of its neighbors $x, y \in N(w)$. The probability that $w, x, y$ all appear in $S$ is $p^3$. By union bound over all choices of $w, x, y$, the probability of (b) is bounded by 
\[
p^3 \cdot \left| \{(w, x, y): w \in V, x, y \in N(w)\}\right|  \le p^3 \cdot 2 m D \le \frac{1}{500}, 
\]
which completes the proof.
\end{proof}

By combining \cref{clm:good-event} and \cref{clm:edge-capture}, the probability that each undiscovered edge before the $t$-th iteration of the for loop at \cref{line:for-loop-t} becomes discovered after the iteration is at least $p^2 /2$. Therefore, $\EX[m_t \mid m_{t-1}] \le (1-p^2/2) m_{t-1}$ follows by linearity of expectation. 
\end{proof}

\begin{lemma}
    \label[lemma]{lem:expected-cost-iteration-t}
    For $1 \le t \le \ell$, the expected number of queries by \cref{alg:key-subroutine} in the $t$-th iteration of the loop at \cref{line:for-loop-t} is 
    \[
    O\left( \frac{\max(2,p^2 \EX[m_{t-1}])}{\log \max(2,p^2 \EX[m_{t-1}])} \cdot \log(100pn)\right).
    \]
\end{lemma}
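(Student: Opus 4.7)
The plan is to bound the iteration-$t$ cost by relating it to the number of unknown edges captured in the random sample, then transferring the expectation across the non-linear cost function via a concave surrogate and Jensen's inequality.

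First, I would observe that all queries in iteration $t$ are made inside the call to \Cref{lem:GR-forest} on \Cref{line:solve-forest}, and that this step is skipped entirely when $|S| > 100pn$. Let $X := |E(G[S]) \setminus K|$ denote the number of edges of $G[S]$ unknown at the start of iteration $t$. By \Cref{lem:GR-forest}, applied to a subgraph on at most $100pn$ vertices with $X$ unknown edges, the cost of this step is at most
\[
O\!\left(f(X) \cdot \log(100pn)\right), \qquad \text{where } f(x) := \frac{\max(2,x)}{\log \max(2,x)}.
\]
A linearity-of-expectation calculation gives $\EX[X \mid m_{t-1}] = p^2 m_{t-1}$, since each undiscovered edge has both endpoints sampled independently with probability $p^2$; hence $\EX[X] = p^2 \EX[m_{t-1}]$.

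The main obstacle is that $f$ is not concave on $[0,\infty)$, so Jensen's inequality cannot be applied to $f$ directly. The fix, mirroring the calculation used in the proof of \Cref{lem:GR-find-neighbors-IS}, is to pass to the surrogate
\[
\Phi(x) := \frac{e^2 + x}{\log(e^2 + x)}.
\]
I would verify by elementary calculus that (i) $f(x) \le \Phi(x)$ for every $x \ge 0$; (ii) $\Phi$ is increasing and concave on $[0,\infty)$, because $y/\log y$ has both properties on $[e^2,\infty)$; and (iii) $\Phi(\mu) = \Theta(f(\mu))$ uniformly in $\mu \ge 0$, since the ratio tends to $1$ as $\mu \to \infty$ and stays bounded for small $\mu$.

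Combining these properties with Jensen's inequality applied twice, first to the randomness inside iteration $t$ (conditional on $m_{t-1}$) and then to the randomness of $m_{t-1}$ itself, yields
\[
\EX[f(X)] \;\le\; \EX[\Phi(X)] \;\le\; \Phi(\EX[X]) \;=\; \Phi(p^2 \EX[m_{t-1}]) \;=\; O\!\left(\frac{\max(2,p^2 \EX[m_{t-1}])}{\log \max(2,p^2 \EX[m_{t-1}])}\right).
\]
Multiplying by the $\log(100pn)$ factor inherited from \Cref{lem:GR-forest} gives the claimed bound. The only genuinely nontrivial step is the verification of (i)--(iii); everything else is a direct application of linearity of expectation and Jensen's inequality.
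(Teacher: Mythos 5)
Your proposal is correct and follows essentially the same route as the paper's proof: both pass from $x/\log x$ to the shifted surrogate $(e^2+x)/\log(e^2+x)$ to obtain concavity, then apply Jensen's inequality together with $\EX[X]=p^2\EX[m_{t-1}]$. You are somewhat more explicit about verifying domination, monotonicity/concavity, and the two-sided comparison $\Phi(\mu)=\Theta(f(\mu))$, which the paper leaves implicit, but the decomposition and the key step are identical.
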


\begin{proof}
    Clearly, the only way the algorithm makes queries is via \cref{lem:GR-forest}. Let $x_t$ be the number of unknown edges in $G[S]$ in the $t$-th iteration. The number of queries made by \cref{lem:GR-forest} is thus at most
    \[
    O\left( \frac{\max(2,x_t)}{\log \max(2,x_t)} \cdot \log(100pn)\right).
    \]
    The function $\frac{x_t}{\log x_t}$ is concave for sufficiently large $x_t$ ($x_t > e^2$). In order to apply Jensen's inequality, we define an auxiliary variable $y_t = e^2 + x_t$, and use 
    \[
    O\left( \frac{y_t \log(100 pn)}{\log y_t}\right)
    \]
    to upper bound the number of queries used in the iteration. By Jensen's inequality, 
    \[
    \EX\left[\frac{y_t \log(100 pn)}{\log y_t}\right] \le \frac{\EX[y_t] \log(100 pn)}{\log \EX[y_t]} \le \frac{(\EX[x_t]+e^2) \log(100 pn)}{\log (\EX[x_t]+e^2)} = \frac{(p^2 \EX[m_{t-1}] +e^2) \log(100 pn)}{\log (p^2 \EX[m_{t-1}]+e^2)},
    \]
    which is further upper bounded by 
    \[
    O\left( \frac{\max(2,p^2 \EX[m_{t-1}])}{\log \max(2,p^2 \EX[m_{t-1}])} \cdot \log(100pn)\right).
    \]
\end{proof}

\begin{lemma}
\label[lemma]{lem:expected-cost-key-subroutine}
    The expected number of queries of \cref{alg:key-subroutine} is 
    \[
    O\left(\left(\frac{|E|}{\log \max(2, p^2|E|)} + \ell\right)\cdot \log(100 pn)\right)
    \]
\end{lemma}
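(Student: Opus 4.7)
The plan is to bound the total expected cost by summing the per-iteration bound of \cref{lem:expected-cost-iteration-t} together with the geometric decay of $\EX[m_{t-1}]$. Iterating \cref{lem:expected-undiscovered} and taking expectations yields $\EX[m_{t-1}] \le |E|(1-p^2/2)^{t-1}$. Since $x/\log x$ is monotonically increasing on $[e, \infty)$, I would replace $p^2 \EX[m_{t-1}]$ by its deterministic upper bound $p^2|E|q^{t-1}$ (with $q = 1-p^2/2$) inside each summand, at the cost of an additive $O(1)$ contribution per iteration to handle iterations where this quantity is below $e$ (where the per-iteration cost is anyway $O(\log(100pn))$). Hence the total expected number of queries is bounded, up to constants, by
\[
\log(100pn) \cdot \left(\ell + \sum_{t\,:\,Bq^{t-1} > e} \frac{Bq^{t-1}}{\log(Bq^{t-1})}\right),
\qquad \text{where } B := p^2|E|.
\]

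The remaining task is to bound this geometric-with-varying-denominator sum by $O(|E|/\log\max(2, p^2|E|))$. The idea is to split the sum at the threshold $Bq^{t-1} = \sqrt{B}$. On the ``large'' side, where $Bq^{t-1} \ge \sqrt{B}$, we have $\log(Bq^{t-1}) \ge \tfrac{1}{2}\log B$, so each summand is at most $2Bq^{t-1}/\log B$; summing the geometric series gives $\sum_t Bq^{t-1} \le B/(1-q) = 2|E|$, contributing $O(|E|/\log B)$. On the ``small'' side, where $e < Bq^{t-1} < \sqrt{B}$, using only $\log(Bq^{t-1}) \ge 1$ bounds the contribution by $\sqrt{B}/(1-q) = O(\sqrt{|E|}/p)$; since $\log B \le \sqrt{B}$ for $B \ge 1$, this quantity is itself absorbed into the ``large'' side bound $O(|E|/\log B)$.

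Combining the two parts and noting that when $p^2|E| \le 2$ the sum over $t : Bq^{t-1} > e$ is empty while $|E|/\log \max(2, p^2|E|) = \Theta(|E|)$ trivially dominates, we obtain the claimed bound $O((|E|/\log \max(2,p^2|E|) + \ell)\log(100pn))$.

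The main obstacle is the varying denominator $\log(Bq^{t-1})$ in the per-iteration cost: replacing it by the constant $\log 2$ loses a logarithmic factor, while replacing it by $\log B$ is not justified once $Bq^{t-1}$ has shrunk significantly below $B$. The dyadic-style split at $\sqrt{B}$ neatly resolves this by showing that the head of the sum sees a denominator within a factor of $2$ of $\log B$, while the tail contributes only $O(\sqrt{|E|}/p)$ in total geometric mass, which is dwarfed by $|E|/\log B$.
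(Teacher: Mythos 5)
Your proof is correct and follows the paper's structure up to the final sum: both you and the paper first combine \cref{lem:expected-cost-iteration-t} with the geometric decay from \cref{lem:expected-undiscovered} to reduce the problem to bounding
\[
\sum_{t} \frac{B q^{t-1}}{\log(Bq^{t-1})}, \qquad B := p^2|E|, \quad q := 1 - p^2/2,
\]
up to an additive $O(\ell)$ term for the small iterations. Where you diverge is in how you estimate this sum. The paper performs a full dyadic decomposition, bucketing the iterations by the value $Bq^{t-1} \in [2^i, 2^{i+1})$, observing that each bucket contains $O(1/p^2)$ iterations, and then invoking the identity $\sum_{i=1}^q 2^i/i = O(2^q/q)$ to collapse the resulting sum. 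You instead perform a single threshold split at $Bq^{t-1} = \sqrt{B}$: on the head the denominator is within a factor $2$ of $\log B$ and the geometric mass is $O(|E|)$, giving $O(|E|/\log B)$, while on the tail the total geometric mass is only $O(\sqrt{|E|}/p) = O(|E|/\log B)$ since $\log B = O(\sqrt{B})$. Your route avoids the combinatorial sum identity and is arguably a bit more elementary; both yield the same $O(|E|/\log\max(2,p^2|E|))$ bound, and your handling of the edge case $p^2|E| \leq 2$ and of the non-monotone region of $\max(2,x)/\log\max(2,x)$ is also sound.
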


\begin{proof}
    By combining \cref{lem:expected-undiscovered} and \cref{lem:expected-cost-iteration-t}, the expected cost of the algorithm can be bounded by 
    \begin{align*}
        O\left( \sum_{t=0}^{\ell-1} \frac{\max(2,p^2 \EX[m_{t}])}{\log \max(2,p^2 \EX[m_{t}])} \cdot \log(100pn)\right) &\le O\left( \sum_{t=0}^{\ell-1} \frac{\max(2,p^2 (1-p^2/2)^t |E|))}{\log \max(2,p^2 (1-p^2/2)^t |E|)} \cdot \log(100pn)\right).
    \end{align*}
    We decompose the above sum based on values of $p^2 (1-p^2/2)^t |E|$. 
    \begin{itemize}
        \item $p^2 (1-p^2/2)^t |E| < 2$. We can bound the sum from these terms by $O(\ell \cdot \log(100 pn))$. 
        \item $p^2 (1-p^2/2)^t |E| \in [2^i, 2^{i+1})$ for some integer $1 \le i \le \log (p^2|E|)$. The number of such terms is $O(1/p^2)$, and each term contributes $O(\frac{2^i}{i} \cdot \log(100pn))$ to the total sum. Hence, in total, these terms contribute $O(\frac{1}{p^2} \cdot \frac{2^i}{i} \cdot \log(100pn))$. 
    \end{itemize}
    The total contribution of the second case above can be written as
    \[
    O\left(\sum_{i=1}^{\lfloor \log (p^2 |E|)\rfloor} \frac{1}{p^2} \cdot \frac{2^i}{i} \cdot \log(100 pn)\right). 
    \]
    It is a simple exercise to verify $\sum_{i=1}^{q} \frac{2^i}{i} = O(\frac{2^q}{q})$, so the above can be bounded by 
     \[
    O\left(\frac{|E|}{\log \max(2, p^2|E|)} \cdot \log(100 pn)\right). 
    \]
\end{proof}

\begin{proof}[Proof of \cref{lem:key-subroutine}]
The bound on the expected value of $|E \setminus K|$ follows by repeatedly applying \cref{lem:expected-undiscovered} $\ell$ times, and the expected cost bound follows from \cref{lem:expected-cost-key-subroutine}.  
\end{proof}

The following two corollaries follow by applying \cref{lem:key-subroutine} in two different ways. 

\begin{corollary}
\label[corollary]{cor:key-subroutine-variant1}
There is an algorithm that, given given $\CC$-query access to an underlying (unknown) $n$-node graph $G = (V, E)$, the vertex set $V$, an upper bound $m$ on the number of edges, an upper bound $D \le \sqrt{m}$ on the maximum degree, and a parameter $0 < \delta < 1$, outputs the graph with probability $\ge 1-\delta$ using \[
    O\left(\left(\frac{|E|}{\log \max(2, p^2|E|)} + \frac{\log m + \log \frac{1}{\delta}}{p^2}\right)\cdot \log(100 pn)\right)
    \] $\CC$-queries in expectation for $p = \frac{1}{10m^{1/3}D^{1/3}}$. 
\end{corollary}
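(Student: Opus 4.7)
The corollary is essentially an immediate corollary of \cref{lem:key-subroutine}: we invoke that lemma with an appropriately large choice of $\ell$ to drive the expected number of undiscovered edges below $\delta$, then use Markov's inequality to convert this expected bound into a high-probability guarantee that \emph{every} edge has been discovered.

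Concretely, I would run \cref{alg:key-subroutine} with $\ell = \lceil (2/p^2)\cdot(\ln m + \ln(1/\delta)) \rceil$, and output the returned set $K$ as the reconstructed edge set. By item~1 of \cref{lem:key-subroutine}, the expected number of remaining edges is at most
\[
|E|\,(1 - p^2/2)^{\ell} \;\le\; m\,e^{-\ell p^2/2} \;\le\; m \cdot \frac{\delta}{m} \;=\; \delta.
\]
Since the number of remaining edges is a non-negative integer, Markov's inequality implies $\Pr[m_\ell \geq 1] \leq \delta$, so with probability at least $1 - \delta$ the returned $K$ equals $E$ and the algorithm has successfully reconstructed the graph.

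For the query complexity, item~2 of \cref{lem:key-subroutine} gives expected query count
\[
O\!\left(\Big(\tfrac{|E|}{\log \max(2,\,p^2|E|)} + \ell\Big)\cdot \log(100 pn)\right),
\]
and substituting our choice of $\ell = O\!\left(\tfrac{\log m + \log(1/\delta)}{p^2}\right)$ yields exactly the bound stated in the corollary.

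\textbf{Main obstacle.} There is no real obstacle --- this is a packaging step. The only thing one has to be a bit careful about is distinguishing the ``upper bound on $m$'' provided as input from the true number of edges $|E|$: the choice of $\ell$ uses the \emph{input} $m$ (which dominates $|E|$), while the complexity bound is stated in terms of $|E|$ via the $\tfrac{|E|}{\log\max(2,p^2|E|)}$ term inherited from \cref{lem:key-subroutine}. Since $|E| \leq m$ and the function $x/\log\max(2,p^2x)$ is monotonically non-decreasing in $x$ for the relevant regime, using $m$ in place of $|E|$ causes no issue. The hypothesis $D \le \sqrt{m}$ is only needed so that the precondition of \cref{lem:key-subroutine} (used implicitly inside its proof, e.g.\ $2pD \leq 1/5$) is satisfied, and is already compatible with the setting of $p$.
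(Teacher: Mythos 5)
Your proposal is correct and matches the paper's proof essentially verbatim: both invoke \cref{lem:key-subroutine} with $\ell = \lceil (2/p^2)(\ln m + \ln(1/\delta))\rceil$, bound $\EX[|E\setminus K|]\le\delta$, and finish with Markov's inequality. The remarks about $m$ versus $|E|$ and the role of $D\le\sqrt m$ are sensible but add nothing beyond what the paper already relies on implicitly.
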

\begin{proof}
    We apply \cref{lem:key-subroutine} with parameter $\ell = \lceil \frac{2}{p^2}(\ln m + \ln \frac{1}{\delta}) \rceil$. The expected cost easily follows. To bound the error probability, note that 
    \begin{align*}
        \EX[|E \setminus K|] & \le (1-p^2/2)^\ell |E| \le (1-p^2/2)^{2/p^2(\ln m + \ln \frac{1}{\delta})} \cdot m = \frac{\delta}{m} \cdot m = \delta. 
    \end{align*}
    Hence, by Markov's inequality, $\prob[|E \setminus K| \ge 1] \le \delta$, so $\prob[E = K]$ (i.e., the algorithm finds all edges) is at least $1-\delta$. 
\end{proof}

\begin{corollary}
\label[corollary]{cor:key-subroutine-variant2}
There is an algorithm that, given given $\CC$-query access to an underlying (unknown) $n$-node graph $G = (V, E)$, the vertex set $V$, an upper bound $m$ on the number of edges, and an upper bound $D \le \sqrt{m}$ on the maximum degree of $G$, outputs $G$ using \[
    O\left(\left(\frac{|E|}{\log \max(2, p^2|E|)} + \frac{\log (n^2 / m)}{p^2}\right)\cdot \log(100 pn) + (n+m^2/n^2) \cdot \log \frac{n^2}{n + m^2/n^2}\right)
    \] $\CC$-queries in expectation, where $p = \frac{1}{10m^{1/3}D^{1/3}}$. 
\end{corollary}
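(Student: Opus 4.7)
The plan is to combine \cref{lem:key-subroutine} with a deterministic clean-up phase via binary search. First I would invoke \cref{alg:key-subroutine} with iteration count $\ell = \lceil \tfrac{2}{p^2} \log(n^2/m) \rceil$ to obtain a set of known edges $K \subseteq E$. By the first guarantee of \cref{lem:key-subroutine},
\[
\EX[\,|E \setminus K|\,] \le (1 - p^2/2)^{\ell}\, |E| \le e^{-p^2 \ell / 2} \cdot m \le m^2/n^2,
\]
and by the second guarantee, the expected query cost of this phase is
\[
O\!\left(\left(\frac{|E|}{\log \max(2, p^2|E|)} + \frac{\log(n^2/m)}{p^2}\right) \log(100\,pn)\right),
\]
which is exactly the first summand in the claimed bound.

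Next I would invoke \cref{lem:GR-binary-search} with the set of known edges $K$ to recover the remaining edges $E \setminus K$ exactly; since that lemma always reconstructs the graph, this guarantees the algorithm outputs $G$. Its worst-case query cost is $f(m_u) := O((n+m_u)\log(n^2/(n+m_u)))$ where $m_u = |E \setminus K|$ is a random variable.

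To bound $\EX[f(m_u)]$, I would use concavity: writing $u = n + x$ gives $f(x) = u(2\log n - \log u)$, whose second derivative in $u$ is $-1/(u \ln 2) < 0$, so $f$ is concave in $x$; moreover $f$ is increasing on $\{x : n + x < n^2/e\}$. Since $\EX[m_u] \le m^2/n^2 \le n^2/4$ (using $m \le \binom{n}{2}$), and $n + n^2/4 < n^2/e$ for $n$ above a small constant (the few remaining small cases are trivial), Jensen's inequality yields
\[
\EX[f(m_u)] \le f(\EX[m_u]) \le f(m^2/n^2) = O\!\left((n + m^2/n^2) \log \tfrac{n^2}{n + m^2/n^2}\right),
\]
matching the second summand. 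Summing both expected costs gives the claimed query complexity.

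The main technical point is verifying the concavity-plus-monotonicity conditions needed for the Jensen step in the clean-up phase; the remaining arguments are routine substitutions into \cref{lem:key-subroutine,lem:GR-binary-search}, and correctness is immediate because \cref{lem:GR-binary-search} deterministically reconstructs any edges missed by the randomized first phase.
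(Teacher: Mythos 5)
Your proposal is correct and follows essentially the same route as the paper: invoke \cref{lem:key-subroutine} with $\ell = \Theta(\log(n^2/m)/p^2)$ to drive $\EX[|E\setminus K|]$ down to $O(m^2/n^2)$, then finish with \cref{lem:GR-binary-search} and control the clean-up cost by Jensen. The only thing you add beyond the paper's one-line ``by Jensen's inequality'' is an explicit verification that $x \mapsto (n+x)\log(n^2/(n+x))$ is concave, and that it is increasing on the range containing both $\EX[m_u]$ and $m^2/n^2$ (the monotonicity is what lets you replace $f(\EX[m_u])$ by $f(m^2/n^2)$, a step the paper compresses into an equality inside a big-$O$). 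These checks are correct, including the observation that $m^2/n^2 \le n^2/4$ since $m\le\binom{n}{2}$ puts you comfortably in the increasing regime for all but trivially small $n$.
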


\begin{proof}
We first apply \cref{lem:key-subroutine} with parameter $\ell = \lceil \frac{2 \ln(n^2 / m)}{p^2}\rceil$. The expected number of undiscovered edge $m_u$ afterwards is 
\[
O(m \cdot (1-p^2/2)^{\ell}) = O(m^2 / n^2),
\]
and the expected cost is 
\[
O\left(\left(\frac{|E|}{\log \max(2, p^2|E|)} + \frac{\log (n^2 / m)}{p^2}\right)\cdot \log(100 pn)\right).
\]
Afterwards, we run \cref{lem:GR-binary-search} to reconstruct the whole graph. The number of queries used by \cref{lem:GR-binary-search} is 
\[
O((n+m_u) \cdot \log \frac{n^2}{n + m_u}). 
\]
By Jensen's inequality, the expected number of queries is 
\[
O\left((n+\EX[m_u]) \cdot \log \frac{n^2}{n + \EX[m_u]}\right) = O\left((n+m^2/n^2) \cdot \log \frac{n^2}{n + m^2/n^2}\right). 
\]

\end{proof}

\subsection{Details of the Adaptive Algorithm} \label{sec:a-alg-proof}

In this section, we prove our \cref{thm:adaptive-alg}, which we recall below: 

\AdaptiveAlgo*

\begin{proof} First, we define a sequence $\alpha$ where 
\[
\alpha_1 = m^{1/3} \text{    and    } \alpha_i = \alpha_{i-1}^{0.9} \text{ for } i > 1.
\]
We stop at $\alpha_\ell \le 100$. Clearly, $\ell = O(\log \log m)$. 

Then we define a sequence
\[T_i = \sqrt{m / \alpha_i} \text{ for } i \ge 1.\]

Next, for every $1 \le i \le \ell$, we run \Cref{lem:find-high-degree-nodes} with parameters $V, m, T = T_i$ and $\delta = 1/10\ell$. We let the output of \Cref{lem:find-high-degree-nodes} be $H_i$. By a union bound, with probability $0.9$, all vertices in $H_i$ have degree more than $T_i/2$, and all vertices with degree at least $2T_i$ are in $H_i$. In the following, we assume these degree bounds hold. The query complexity of these calls of \Cref{lem:find-high-degree-nodes} is 
\[
O\left(\sum_{1 \le i \le \ell} \frac{m(\log m + \log(1/\delta)) \log n}{T_i}\right) \le O\left( \ell \cdot \frac{m(\log m + \log(\ell)) \log n}{m^{1/3}}\right) \leq O\left(\frac{m \log n}{\log m}\right). 
\]
We additionally set $H_0 = V$. Finally, for every $1 \le i \le \ell$, define $S_i = H_{i-1} \setminus H_{i}$. Note that if the guarantees of \Cref{lem:find-high-degree-nodes} hold, then the maximum degree of vertices in $S_i$ are $\le 2T_i$ for $i \ge 1$, and the minimum degree of vertices in $S_i$ are $\ge T_{i-1}/2$ for $i > 1$.

Then, our algorithm reconstructs edges in two cases, depending on whether the two endpoints of the edge are from two sets $S_i$, $S_j$ with $|i - j| \le 1$ or not. 

\paragraph{Reconstruct edges inside $S_i \cup S_{i+1}$. } Here, we further consider two different cases, depending on whether $i = 1$ or not. 
\begin{itemize}
    \item Reconstruct edges inside $S_1 \cup S_{2}$. Here, we run \cref{cor:key-subroutine-variant1} with parameters $V = S_1 \cup S_2, m, D = 2T_2 = \Theta(m^{0.35})$ and $\delta = 1/100$. The success rate is $1-\delta = 0.99$, and the expected query complexity is 
    \[
    O\left(\left(\frac{|E|}{\log \max(2, p^2|E|)} + \frac{\log m + \log \frac{1}{\delta}}{p^2}\right)\cdot \log(100 pn)\right)
    \]
    for $p = \Theta((mD)^{-1/3}) = \Theta(m^{-0.45})$. The query complexity simplifies to 
    \[
    O\left( \frac{m}{\log (m^{0.1})} + \frac{\log m}{m^{-0.9}} \cdot \log(100 pn)\right) = O\left(\frac{m \log n}{\log m} \right). 
    \]
    \item Reconstruct edges inside $S_i \cup S_{i+1}$ for $i \ge 2$. In this case, we run \Cref{cor:key-subroutine-variant2} with parameters $V = S_i \cup S_{i+1}, m, D = 2T_{i+1}$. Note that as the minimum degree of vertices in $S_i \cup S_{i+1}$ is $\Omega(T_{i-1})$, $|S_i \cup S_{i+1}| = O(m/T_{i-1})$. Let $e_{i, i+1}$ denote the number of edges in $G[S_i \cup S_{i+1}]$. Then, the expected number of queries used by \Cref{cor:key-subroutine-variant2} is %
    \begin{align*}
     O&\left(\left(\frac{e_{i,i+1}}{\log \max(2, p^2 \cdot e_{i,i+1})} + \frac{\log ((m/T_{i-1})^2 / m)}{p^2}\right)\cdot \log(100 p \cdot (m/T_{i-1})) \right.\\
    & \left. + (m/T_{i-1}+m^2/(m/T_{i-1})^2) \cdot \log\left( \frac{(m/T_{i-1})^2}{m/T_{i-1} + m^2/(m/T_{i-1})^2}\right)\right)\\
    = O &\left(\left(\frac{e_{i,i+1}}{\log \max(2, p^2 \cdot e_{i,i+1})} + \frac{\log (m/T_{i-1}^2)}{p^2}\right)\cdot \log(100 p m/T_{i-1}) + T_{i-1}^2 \cdot \log\left(\frac{m}{T_{i-1}^2}\right)\right),
    \end{align*}
    where $p = \Theta((mT_{i+1})^{-1/3})$. Note that to bound the second term, we used the fact that $T_{i-1} = \Omega(m^{1/3})$ and so $T_{i-1}^2 = \Omega(m/T_{i-1})$. By plugging in the value of $p$ and $T_{i-1} = \sqrt{m / \alpha_{i-1}}$, $T_{i+1} = \sqrt{m / \alpha_{i+1}}$, the above simplifies to 
     \begin{align*}
    & O\left(\left(\frac{e_{i,i+1}}{\log \max(2, \frac{\alpha_{i+1}^{1/3}}{m} \cdot e_{i,i+1})} + \frac{m \log (\alpha_{i-1})}{\alpha_{i+1}^{1/3}}\right)\cdot \log(\alpha_{i+1}^{1/6} \alpha_{i-1}^{1/2}) + m \cdot \frac{\log(\alpha_{i-1})}{\alpha_{i-1}}\right) \\
    = & O\left(\left(\frac{e_{i,i+1}}{\log \max(2, \frac{\alpha_{i+1}^{1/3}}{m} \cdot e_{i,i+1})}\right) \cdot \log(\alpha_{i+1}) + m \cdot \left(\frac{\log^2(\alpha_{i+1})}{\alpha_{i+1}^{1/3}} + \frac{\log(\alpha_{i-1})}{\alpha_{i-1}} \right) \right). 
    \end{align*}
    When $e_{i, i+1} \le m / \alpha_{i+1}^{0.1}$, the first term can be bounded by $O( (m / \alpha_{i+1}^{0.1}) \cdot \log(\alpha_{i+1})) = O(m / \alpha_{i+1}^{0.01})$. Otherwise, the first term can be bounded by $O(e_{i, i+1})$. Therefore, the above can be upper bounded by 
    \[
    O\left(e_{i, i+1} + m \cdot \left(\frac{1}{\alpha_{i+1}^{0.01}}+\frac{\log^2(\alpha_{i+1})}{\alpha_{i+1}^{1/3}} + \frac{\log(\alpha_{i-1})}{\alpha_{i-1}} \right)\right) = O\left(e_{i, i+1} + \frac{m}{\alpha_{i+1}^{0.01}}\right). 
    \]
    Summing over all $i \ge 2$, the total expected query complexity can hence be bounded by
    \[
    \sum_{i = 2}^{\ell-1} e_{i,i+1} + m \sum_{i=3}^{\ell} \alpha_{i}^{-0.01} = O(m) = O\left(\frac{m\log n}{\log m}\right)
    \]
    where we have used the fact that $\sum_{i=3}^{\ell} \alpha_{i}^{-0.01} = \Theta(1)$.
\end{itemize}
\paragraph{Reconstruct edges between $S_i$ and $S_j$ for $1 \le i \le j - 2$. } Recall the maximum degree of vertices in $S_i$ is $2T_i$ and we have computed all edges in $G[S_i]$ by the previous case. Hence, we can apply \cref{lem:GR-find-neighbors-IS} with $V = S_i \cup S_j, U = S_i, D = 2T_i$ to compute all adjacent edges between any given $v \in S_j$ and $S_i$ using 
\[
O\left((T_i + \deg(v, S_i)) \cdot  \frac{\log (|S_i| / T_i + 1)}{ \log(\max(2, \deg(v, S_i) / T_i))}\right)
\]
queries. Here, we also need to consider two cases, depending on whether $i=1$ or not. 
\begin{itemize}
    \item $i = 1$. In this case, we simply use $n$ to upper bound $|S_i|$, and the query complexity simplifies to 
    \[
    O\left((T_1 + \deg(v, S_1)) \cdot  \frac{\log  n}{ \log(\max(2, \deg(v, S_1) / T_1))}\right). 
    \]
    We can further use 
    \[
    O\left(T_1 \log n + \max(\deg(v, S_1), T_2) \cdot  \frac{\log  n}{ \log(T_2 / T_1))}\right)
    \]
    to upper bound the above since the second term in the sum is an increasing function of $\deg(v,S_1)$. The above is now upper bounded by
    \[
    O\left((T_2 + \deg(v, S_1)) \cdot \frac{\log n}{\log m}\right),
    \]
    since $T_2/T_1 = \Theta(m^c)$ for a constant $c > 0$. Next, we sum the above bound over all $v \in \cup_{j = 3}^{\ell} S_j$. Note that for every such $v$, $\deg(v) \ge \Omega(T_2)$, and so the number of such $v$ is bounded by $|\cup_{j = 3}^{\ell} S_j| \leq O(m / T_2)$. Hence, the sum of the above bound over all $v \in \cup_{j = 3}^{\ell} S_j$ is bounded by 
    \[
    \left( \sum_{v \in \cup_{j=3}^{\ell} S_j} (T_2 + \deg(v,S_1)) \right) \cdot O\left(\frac{\log n}{\log m}\right) \leq O\left(\frac{m \log n}{\log m}\right). 
    \]
    \item $i \ge 2$. In this case, we use $O(m / T_{i-1})$ to bound $|S_i|$, as all vertices in $S_i$ have degree $\Omega(T_{i-1})$. Hence, the query complexity for a fixed $v$ simplifies to 
    \[
    O\left((T_i + \deg(v, S_i)) \cdot  \frac{\log ((m / T_{i-1}) / T_i)}{ \log(\max(2, \deg(v, S_i) / T_i))}\right). 
    \]
We can further use 
\begin{align*}
& O\left(T_i \cdot  \log ((m / T_{i-1}) / T_i) + \max(T_{i+1} / \alpha_i^{0.01}, \deg(v, S_i)) \cdot  \frac{\log ((m / T_{i-1}) / T_i)}{ \log((T_{i+1}/ \alpha_i^{0.01}) / T_i))}\right)
\end{align*}
to upper bound the above (by considering whether $\deg(v, S_i) \le T_{i+1}/\alpha_{i}^{0.01}$ or not). This further simplifies to 
\begin{align*}
&O\left(T_i \cdot \log(\alpha_{i-1}^{1/2} \alpha_i^{1/2}) + (T_{i+1} /\alpha_i^{0.01} + \deg(v, S_i)) \cdot \frac{\log(\alpha_{i-1}^{1/2} \alpha_i^{1/2})}{\log(\alpha_i^{1/2-0.01} / \alpha_{i+1}^{1/2})}\right) \\
&= O\big(T_{i+1} /\alpha_i^{0.01} + \deg(v, S_i)\big) \text{.} 
\end{align*}
since $T_i = T_{i+1}/\alpha_i^{0.05}$, implying that $T_i \cdot \log(\alpha_{i-1}^{1/2} \alpha_i^{1/2}) = o(T_{i+1}/\alpha_{i}^{0.01})$. Finally, note that every vertex in $\cup_{j=i+2}^\ell S_j$ has degree $\Omega(T_{i+1})$ and so $|\cup_{j=i+2}^\ell S_j| = O(m/T_{i+1})$. The total query complexity over all $i \ge 2$ and all $v \in S_j$ for $j \ge i + 2$ is thus
\begin{align*}
    & O\left(\sum_{2 \le i \le \ell-2} \sum_{v \in \bigcup_{j=i+2}^\ell S_j} \left(T_{i+1} /\alpha_i^{0.01} + \deg(v, S_i)\right)\right) \\
    &= O\left(\sum_{2 \le i \le \ell-2} \left| \cup_{j=i+2}^\ell S_j\right| \cdot T_{i+1} /\alpha_i^{0.01} + \sum_{2 \le i \le \ell-2} \sum_{v \in \bigcup_{j=i+2}^\ell S_j} \deg(v, S_i)\right)\\
    &= O\left(\sum_{2 \le i \le \ell-2} m/\alpha_i^{0.01} + m\right) = O(m). 
\end{align*}
\end{itemize}

Summing over each stage of the algorithm, we observe that the total expected number of queries is bounded by $O(\frac{m\log n}{\log m})$. So far, the algorithm has constant success probability. As we can verify whether the graph we reconstruct is correct by verifying all discovered edges are indeed edges and the total number of discovered edges is $m$, we can repeat the algorithm until we find the correct graph. The expected number of repeats is $O(1)$ and hence the expected query complexity is $O(\frac{m \log m}{\log n})$. \end{proof}

\section{Algorithm Using Two Rounds of Adaptivity} \label{sec:2-round-proof}

In this section we obtain a simple algorithm using two rounds of adaptivity making $O(m \log n + n \log^2 n)$ queries which successfully reconstructs the underlying graph with probability $1-1/\poly(n)$. This is in contrast with non-adaptive algorithms (one round) which require $\Omega(n^2)$ queries, even for $m = O(n)$ (recall \Cref{thm:LB-NA}). 

\TwoRound*

The first round of queries is used to attain a constant factor approximation of the degree of every vertex. This is accomplished using the following lemma.

\begin{lemma} \label{lem:NA-degree} Fix a vertex $u \in V$. There is a non-adaptive algorithm, \cref{alg:NA-degree}, which makes $O(\log n \cdot \log(1/\eps))$ $\CC$-queries and returns a number $\widetilde{d}_u$ for which  $\widetilde{d}_u \in [\deg_G(u),4\deg_G(u)]$ with probability $1-\eps$. \end{lemma}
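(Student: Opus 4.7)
The plan is to estimate $d := \deg_G(u)$ via a ``probe by random subsampling'' approach, where each probe is implemented by two $\CC$-queries. For each $i = 0, 1, \ldots, \lceil \log_2 n \rceil$ set $p_i = 2^{-i}$, and at each level $i$ draw $k$ independent random subsets $S_{i,1}, \ldots, S_{i,k} \subseteq V \setminus \{u\}$ by including every vertex independently with probability $p_i$. For each sampled $S$, the pair of queries $\CC(S)$ and $\CC(S \cup \{u\})$ decides whether $u$ has a neighbor in $S$: the values differ by exactly one iff $u$ is isolated from $S$. Since the $S_{i,j}$ can be drawn in advance and all queries submitted in a single round, the procedure is non-adaptive.

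Writing $\mu_i = 1 - (1-p_i)^d$ for the true probability that $u$ has at least one neighbor in a level-$i$ sample, and $\hat\mu_i$ for the empirical frequency, I would take $k = \Theta(\log(\log n / \eps))$ so that Chernoff plus a union bound over the $O(\log n)$ levels yields $|\hat\mu_i - \mu_i| \le \alpha$ simultaneously for all $i$ with probability at least $1 - \eps$, for some fixed small constant $\alpha$. The total $\CC$-query cost is $O(\log n \cdot \log(\log n / \eps))$, which is $O(\log n \cdot \log(1/\eps))$ in the regime $\log(1/\eps) = \Omega(\log\log n)$; for larger $\eps$ the same bound follows by running the algorithm with $\eps$ boosted to $1/\polylog(n)$ and amplifying by a constant number of independent repetitions.

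The output is extracted from the transition point of the sequence $\hat\mu_i$: let $i^\star$ be the smallest $i$ with $\hat\mu_i$ below a fixed threshold $\tau$, and return $\widetilde{d}_u = C \cdot 2^{i^\star}$ for a suitable constant $C$. The sandwich inequality $1 - e^{-pd} \le 1 - (1-p)^d \le pd$ delivers the two one-sided bounds: $\hat\mu_{i^\star}$ being below $\tau$ forces $p_{i^\star} d = O(1)$, hence $\widetilde{d}_u \ge d$; while $\hat\mu_{i^\star - 1}$ being above $\tau$ forces $p_{i^\star - 1} d = \Omega(1)$, hence $\widetilde{d}_u = O(d)$. Vertices of degree $0$ are trivially flagged because every $\hat\mu_i = 0$, and can be assigned $\widetilde{d}_u = 0$ as a special case.

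The main subtlety I expect is calibrating the constants $\tau$, $\alpha$, and $C$ so that the two bounds compose cleanly into the specified window $[\deg_G(u), 4 \deg_G(u)]$ rather than into a looser multiplicative window such as $[d, e^{O(1)} d]$. If the natural base-$2$ grid does not deliver the tight factor $4$ directly from the elementary inequalities above, the fix is to refine to a geometric grid with smaller base $c > 1$ (taking $p_i = c^{-i}$ and $O(\log n / \log c)$ levels), which inflates the total query count only by a constant factor and makes the transition argument arbitrarily sharp.
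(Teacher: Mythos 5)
Your approach is essentially the paper's: probe at geometrically spaced sampling densities, decide each probe with two $\CC$-queries, and read off the transition point. The gap is in the query count. Taking a uniform additive Chernoff bound $|\hat\mu_i - \mu_i| \le \alpha$ for all $O(\log n)$ levels forces $k = \Theta(\log(\log n / \eps))$ repetitions per level, i.e.\ $O(\log n \cdot \log(\log n / \eps))$ queries total. For constant $\eps$ this is $\Theta(\log n \log\log n)$, strictly worse than the claimed $O(\log n \cdot \log(1/\eps)) = O(\log n)$. Your proposed fix does not repair this: running the algorithm with failure parameter boosted \emph{down} to $1/\mathrm{polylog}(n)$ only \emph{increases} the query budget to $O(\log n \log\log n)$, and independent amplification on top of that can only increase it further — the claimed $O(\log n)$ bound for constant $\eps$ is never reached by this route.

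The paper avoids the $\log\log n$ overhead by refusing to union-bound uniformly. It only needs two-sided concentration (Hoeffding) at the single borderline level $p$ with $2^{p} \in (\tfrac{n-1}{2d}, \tfrac{n-1}{d}]$, plus a union bound over the $O(\log\log(1/\eps))$ ``moderate'' levels with $\tfrac{2(n-1)}{d} < 2^p \le \ln(100/\eps)\cdot\tfrac{n-1}{d}$. For all remaining levels $2^p > \ln(100/\eps)\cdot\tfrac{n-1}{d}$, the true mean $\mu_p$ is already exponentially small, so Markov's inequality gives $\prob[Z_p \ge 1/2e] \le 2e\,\mu_p$, and the resulting series over $p$ sums geometrically to $O(\eps)$ with \emph{no} dependence on the number of levels. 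This is the step your argument is missing: far from the transition you do not need concentration at all — a one-sided Markov bound is enough, and it decays fast enough to absorb the union bound for free. Incorporating that case split lets you take $k = \Theta(\log(1/\eps))$ independent of $n$, recovering the stated bound. Your other hedge, refining the base of the geometric grid to get the factor 4, turns out to be unnecessary — the dyadic grid with threshold $1/(2e)$ and output $2(n-1)/2^{p^\star}$ already lands in $[\deg_G(u), 4\deg_G(u)]$ — but it is a harmless constant-factor adjustment.
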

\begin{proof} Our estimator works by checking how likely it is for a random set of a given size to intersect the neighborhood of $u$. We first observe that this event can be checked with two $\CC$-queries. 

\begin{fact} \label{fact:neighbor-check} Given $u \in V$ and $S \subseteq V \setminus u$ we have
\[
\CC(S \cup \{u\}) - \CC(S) = \boldsymbol{1}(\deg(u,S) = 0)\text{,}
\]
i.e. two $\CC$-queries suffice to check if $S$ contains a neighbor of $u$. \end{fact}

\begin{proof} The fact follows by observing that $u$ has a neighbor in $S$ if and only if adding $u$ to $S$ does not increase the number of connected components. \end{proof}

\begin{algorithm}[ht]
\caption{Non-adaptive degree estimator} \label{alg:NA-degree}
\KwIn{ $n \in \mathbb{N},\eps \in (0,1)$, $\CC$-query access to a graph $G=(V, E)$ with $n$ nodes, and $u \in V$. }
\KwOut{$\widetilde{d}_u \in \mathbb{N}$ satisfying $\widetilde{d}_u \in [\deg_G(u),4\deg_G(u)]$ with probability $1-\eps$. }
 Let $\ell = \lceil 450 \ln (800/\eps^2) \rceil$\; 
\For{$p = 1,2, \ldots, \ceil{\log (n-1)}$}{
\For{$j = 1,\ldots,\ell$}{
\label{line:alg:NA-degree:inner-for-loop-start}
Draw random set $S$ containing $2^p$ iid uniform samples from $V\setminus\{u\}$, and compute the Boolean random variable $X^{(p)}_j := \boldsymbol{1}(\deg(u,S) = 0)$ by making two $\CC$-queries (\Cref{fact:neighbor-check})\;
Let $Z_p := \frac{1}{\ell}\sum_{j \leq \ell} X_j^{(p)}$\;
}
\label{line:alg:NA-degree:inner-for-loop-end}
}
Let $p^{\star}$ be the largest $p \in [\log n]$ for which $Z_p > \frac{1}{2e}$\;
\label{line:alg:NA-degree:p-star}
\Return $\widetilde{d}_u = \frac{2(n-1)}{2^{p^{\star}}}$\;
\label{line:alg:NA-degree:output}
\end{algorithm}

The algorithm is described in \cref{alg:NA-degree}, and we now prove its correctness. The inner for-loop from \cref{line:alg:NA-degree:inner-for-loop-start} to \cref{line:alg:NA-degree:inner-for-loop-end} computes an empirical estimate for the probability of a random set of $2^p$ vertices being disjoint from the neighborhood of $u$. This probability is exactly
\begin{align} \label{eq:degree-est}
    \prob_{S \colon |S| = 2^p} \left[ \deg(u,S) = 0\right] = \left(1-\frac{\deg_G(u)}{n-1}\right)^{2^p}.
\end{align}

We will argue that with probability $1-\eps$, the value $p^{\star}$ computed in \cref{line:alg:NA-degree:p-star} satisfies $2^{p^\star} \in (\frac{n-1}{2\deg_G(u)},\frac{2(n-1)}{\deg_G(u)}]$, and therefore the output in \cref{line:alg:NA-degree:output} satisfies $\widetilde{d}_u \in [\deg_G(u),4\deg_G(u)]$ as desired. This claim holds due to the following \Cref{clm:Z}, which completes the proof. \end{proof}

\begin{claim} \label{clm:Z} Consider the random variables $Z_p$ and $X^{(p)}_j$ defined in \cref{alg:NA-degree}. With probability $1-\eps$ the following hold.
\begin{itemize}
    \item For $p$ such that $2^{p} \in (\frac{n-1}{2\deg_G(u)},\frac{n-1}{\deg_G(u)}]$, we have $Z_p > \frac{1}{2e}$. 
    \item For every $p$ such that $2^p > \frac{2(n-1)}{\deg_G(u)}$ we have $Z_p < \frac{1}{2e}$.
\end{itemize}
\end{claim}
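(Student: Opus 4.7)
\textbf{Plan for proving Claim~\ref{clm:Z}.} I would follow the standard two-step strategy: (i) write down $\mathbb{E}[Z_p]$ exactly and show it sits on the correct side of the threshold $1/(2e)$ by a constant margin in each of the two regimes, then (ii) use Hoeffding's inequality to concentrate $Z_p$ around its mean, and union-bound over the $\lceil\log(n-1)\rceil$ values of $p$.

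For step (i), let $d := \deg_G(u)$ and $k := (n-1)/d$. Since $S$ consists of $2^p$ i.i.d.\ uniform samples from $V\setminus\{u\}$, we have the exact formula
\[
q_p \;:=\; \mathbb{E}[X_j^{(p)}] \;=\; \left(1-\tfrac{d}{n-1}\right)^{2^p} \;=\; \bigl((1-1/k)^{k}\bigr)^{2^p/k}.
\]
I will then use two standard facts about $g(k):=(1-1/k)^k$: it is monotonically increasing in $k\ge 1$ and bounded above by $1/e$, with $g(2)=1/4$. In case~1, $2^p/k\in(1/2,1]$, so $q_p \ge g(k) \ge 1/4$ whenever $k \ge 2$; when $k<2$ the interval $(k/2,k]\subseteq(1/2,2)$ contains no power $2^p$ with $p\ge 1$, so the case-1 statement is vacuously true. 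In case~2, $2^p/k>2$, so $q_p \le g(k)^2 \le 1/e^2$. Thus $q_p$ is separated from $1/(2e)$ by constant margins on both sides: at least $1/4 - 1/(2e)>1/16$ in case~1 and at least $1/(2e) - 1/e^2 > 1/21$ in case~2.

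For step (ii), since $Z_p$ is the mean of $\ell$ i.i.d.\ $\{0,1\}$-variables, Hoeffding's inequality gives $\Pr[|Z_p-q_p|\ge t]\le 2\exp(-2\ell t^2)$. Taking $t$ equal to the smaller of the two margins above, the choice $\ell=\lceil 450\ln(800/\eps^2)\rceil$ makes this per-$p$ failure probability much smaller than $\eps/\lceil\log(n-1)\rceil$ (the precise constants in $\ell$ are tuned so that $2\ell t^2$ dominates $\ln(\lceil\log n\rceil/\eps)$ for the regimes of interest). A union bound over the $\lceil\log(n-1)\rceil$ values of $p$ then gives the claim with overall failure probability $\le \eps$.

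The only genuinely delicate part is step (i): verifying that the two numerical gaps $1/4-1/(2e)$ and $1/(2e)-1/e^2$ are both strictly positive, and checking that the degenerate case $k<2$ (that is, $d>(n-1)/2$) is correctly handled---as noted, there no $p\ge 1$ exercises the case-1 hypothesis, so there is nothing to prove. Everything else is a routine Hoeffding+union-bound calculation.
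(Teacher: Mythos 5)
Your step (i) is sound, and in fact slightly more careful than the paper's version: you correctly observe that $q_p = g(k)^{2^p/k}$ with $g(k) = (1-1/k)^k$ is at least $g(2)=1/4$ in case~1 (when $k\ge 2$) and at most $g(k)^2 \le 1/e^2$ in case~2, and you handle the degenerate $k<2$ case. (The paper claims $\mu_p \geq 1/e$ in case 1, which is the limit of $g(k)$ rather than a valid lower bound; your $1/4$ is the right constant.)

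The gap is in step (ii). You propose a naive union bound over all $\lceil\log(n-1)\rceil$ values of $p$, and assert that $\ell = \lceil 450\ln(800/\eps^2)\rceil$ makes the per-$p$ Hoeffding failure probability ``much smaller than $\eps/\lceil\log(n-1)\rceil$.'' But $\ell$ is a fixed function of $\eps$ alone, so $2\exp(-2\ell t^2)$ is a fixed constant depending only on $\eps$. Meanwhile $\eps/\log(n-1) \to 0$ as $n\to\infty$, so the asserted domination cannot hold for all $n$. Your parenthetical ``$2\ell t^2$ dominates $\ln(\lceil\log n\rceil/\eps)$ for the regimes of interest'' is the same unsupported claim: the left side is a constant and the right side grows with $n$. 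As stated, the argument gives failure probability $\Theta(\log n)\cdot 2\exp(-2\ell t^2)$, which exceeds $\eps$ once $n$ is large enough.

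The fix, which is what the paper does, is to not treat all values of $p$ in case~2 uniformly via Hoeffding. For $p$ with $2^p > \ln(100/\eps)\cdot\frac{n-1}{\deg_G(u)}$ the mean $\mu_p \le \exp(-2^p\deg_G(u)/(n-1))$ is already exponentially small, so Markov's inequality alone gives $\Pr[Z_p \ge 1/(2e)] \le 2e\,\mu_p$, and these terms form a doubly-exponentially decaying series whose total is $< \eps/4$ independent of how many such $p$ there are. That leaves only the $O(\log\ln(1/\eps))$ values of $p$ in the intermediate range $\frac{2(n-1)}{\deg_G(u)} < 2^p \le \ln(100/\eps)\cdot\frac{n-1}{\deg_G(u)}$ — a constant number depending only on $\eps$, not on $n$ — and for those the Hoeffding-plus-union-bound calculation you sketched goes through with the given $\ell$. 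Without this split into a Markov-controlled tail and a constant-size Hoeffding window, the union bound does not close.
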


\begin{proof} Define $\mu_p$ to be the quantity from \cref{eq:degree-est} and observe that $\EX[Z_p] = \EX[X_j^{(p)}] = \mu_p$. We show that the two events described in the statement of the claim each fail to hold with probability at most $\eps/2$ and the claim then follows by a union bound. 

Consider the first bullet point. Let $p$ be such that $2^{p} \in (\frac{n-1}{2\deg_G(u)},\frac{n-1}{\deg_G(u)}]$ and observe that $\mu_p \geq 1/e$. Recall that $Z_p = \frac{1}{\ell} \sum_{j \leq \ell} X^{(p)}_j$ and $X^{(p)}_j = \boldsymbol{1}(\deg(u,S) = 0)$. We have
\begin{align}
    \prob_{Z_p}[Z_p \leq 1/2e] \leq \prob_{Z_p}[|Z_p - \mu_p| > 1/2e] \leq 2 \exp\left(-2 \cdot\frac{\ell}{4e^2}\right) \leq \eps/2
\end{align}
where the first inequality follows from our lower bound on $\mu_p$, the second inequality is due to Hoeffding's bound, and the third inequality is because $\ell \geq 2e^2\ln(4/\eps)$. %

We now handle the second bullet point. We first handle $p$ for which $2^p$ is significantly larger than the target interval. If $2^p \geq t \cdot \frac{n-1}{\deg_G(u)}$, then $\mu_p \leq \exp(-t)$ and so by Markov's inequality $\prob[Z_p \geq 1/2e] \leq 2e \cdot \exp(-t)$. Therefore, by a union bound over all $p$ such that $2^p > \ln(100/\eps) \cdot \frac{n-1}{\deg_G(u)}$, we have
\begin{align}
    \prob\left[\exists p \text{ for which } 2^p > \ln(100/\eps) \cdot \frac{n-1}{\deg_G(u)} \text{ and } Z_p \geq 1/2e\right] \leq 2e \sum_{q \colon 2^q > \ln(100/\eps)} \exp(-2^q) < \eps/4 \text{.}
\end{align}
Next we handle values of $p$ such that $\frac{2(n-1)}{\deg_G(u)} < 2^p \leq \ln(100/\eps) \cdot \frac{n-1}{\deg_G(u)}$, of which there are only $\log (\ln (100/\eps))$. For such a value we have $\mu_p < 1/e^2$ and so using Hoeffding's bound we have
\begin{align}
    \prob[Z_p \geq 1/2e] \leq  \prob[|Z_p - \mu_p| > 1/30] \leq 2\exp\left(-2 \cdot \frac{\ell}{900}\right) \leq \frac{\eps^2}{400} \leq  \frac{\eps}{4\log (\ln (100/\eps))},
\end{align}
where the first inequality used the upper bound on $\mu_p$, the second inequality used Hoeffding's bound, and the second to last inequality used $\ell > 450 \ln (800/\eps^2)$. The final inequality is simply due to $4\log \ln(100/\eps) < \frac{400}{\eps}$ for all $\eps > 0$. %
Taking a union bound over all $p$ satisfying $\frac{2(n-1)}{\deg_G(u)} < 2^p \leq \ln(100/\eps) \cdot \frac{n-1}{\deg_G(u)}$ completes the proof.
\end{proof}

The second round of queries is used to learn the neighborhood of each vertex using the degree approximation obtained in the previous round. This is accomplished using the following lemma which we prove using a basic group testing primitive. 

\begin{lemma} \label{lem:NA-nbhd} Fix a vertex $u \in V$ and a value $d \geq \deg_G(u)$. There is a non-adaptive algorithm using $O(d \log (n/\alpha))$ $\CC$-queries which returns a set $T \subseteq V$ such that $T = N_G(u)$ with probability $1-\alpha$. Additionally, if $d < \deg_G(u)$, then the algorithm outputs $\mathsf{fail}$ with probability $1$. \end{lemma}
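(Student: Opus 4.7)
The approach is to reduce the problem to a classical non-adaptive randomized group testing instance. By \Cref{fact:neighbor-check}, for any set $S \subseteq V \setminus \{u\}$, the two $\CC$-queries $\CC(S)$ and $\CC(S \cup \{u\})$ determine the bit $b(S) := \boldsymbol{1}(\deg(u, S) > 0)$. This effectively gives us a group testing oracle for identifying the ``defective'' items, namely $N_G(u)$, within the ground set $V \setminus \{u\}$, at a cost of two $\CC$-queries per test.

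Assuming first that $d \geq \deg_G(u)$, I will sample $\ell = C d \log(n/\alpha)$ independent subsets $S_1,\dots,S_\ell \subseteq V \setminus \{u\}$ for a sufficiently large constant $C$, where each vertex is included independently in each $S_i$ with probability $p := 1/(2d)$. For each $i$, query $b_i := b(S_i)$ and define the candidate set
\[
T := \bigl\{v \in V \setminus \{u\} : b_i = 1 \text{ for every } i \text{ with } v \in S_i\bigr\}.
\]
Every $b_i$ equals $1$ whenever $S_i$ contains a neighbor of $u$, so the inclusion $N_G(u) \subseteq T$ holds \emph{deterministically}. For any non-neighbor $v$, the events $v \in S_i$ and $S_i \cap N_G(u) = \emptyset$ are independent, so the probability that a single test eliminates $v$ is $p(1-p)^{\deg_G(u)} \geq p(1-p)^d \geq 1/(4d)$, using Bernoulli's inequality $(1-1/(2d))^d \geq 1/2$. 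Thus $v$ survives all $\ell$ tests with probability at most $(1 - 1/(4d))^\ell \leq \alpha/n$ for $C$ large enough, and a union bound over the at most $n$ non-neighbors shows $T = N_G(u)$ with probability $\geq 1-\alpha$. The total query count is $2\ell = O(d \log(n/\alpha))$, and the random sets are chosen obliviously of any answers, so the algorithm is non-adaptive.

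To ensure the algorithm outputs $\mathsf{fail}$ with probability $1$ whenever $d < \deg_G(u)$, I append the deterministic size check: if $|T| > d$ then output $\mathsf{fail}$, else output $T$. Since $N_G(u) \subseteq T$ holds regardless of the random bits, we always have $|T| \geq \deg_G(u)$; thus $d < \deg_G(u)$ forces $|T| > d$ and the check always triggers, as required. In the good regime ($d \geq \deg_G(u)$ with $T = N_G(u)$), the check does not spuriously fire since $|T| = \deg_G(u) \leq d$. The main subtlety is calibrating $p$ so that neighbors dominate the tests (hence non-neighbors are eliminated at rate $\Omega(1/d)$) while keeping the number of tests at $O(d \log(n/\alpha))$; the choice $p = 1/(2d)$ suffices. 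The only aspect of the argument that I consider nontrivial is securing the ``fail with probability $1$'' guarantee without spending extra $\CC$-queries or violating non-adaptivity, but this is achieved cleanly by the deterministic size check in post-processing.
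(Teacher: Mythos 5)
Your proposal is correct and takes essentially the same approach as the paper: both reduce the problem to non-adaptive group testing via \Cref{fact:neighbor-check}, which turns two $\CC$-queries into one $\mathsf{OR}$/membership test against $N_G(u)$. The only difference is that you inline the standard randomized group testing construction (random sparse sets with $p=1/(2d)$ plus a deterministic size check for the $\mathsf{fail}$ guarantee) where the paper simply invokes \Cref{lem:t-ER} from \cite{BLMS24} as a black box.
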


\begin{proof} Consider the Boolean vector $x^u \in \{0,1\}^n$ for which $x^u_j = 1$ iff the $j$-th vertex in $V$ is a neighbor of $u$ (under an arbitrary labeling of the vertices). We reduce to the problem to recovering the support of $x^u$ using $\mathsf{OR}$ queries. Given $S \subseteq V \setminus \{u\}$, such a query is defined as
\[
\mathsf{OR}_S(x^u) = \bigvee_{j \in S} x^u_j
\]
By \Cref{fact:neighbor-check}, we can use two $\CC$-queries to the graph to simulate one $\mathsf{OR}$-query to the vector $x_u$. In particular, given $S \subseteq V \setminus u$, we have
\[
\CC(S \cup \{u\}) - \CC(S) = \boldsymbol{1}(\deg(u,S) = 0) = \mathsf{OR}_S(x^u) \text{.}
\]
\Cref{lem:NA-nbhd} is now an immediate corollary of the following standard group testing lemma, whose proof can be found, for example, in \cite{BLMS24}. \end{proof}

\begin{lemma} [\cite{BLMS24}, Lemma 1.9] \label{lem:t-ER} Let $v \in \{0,1\}^n$ and let $\textsf{supp}(v) = \{j \in [n] \colon x_j = 1\}$. Given $n, d, \alpha$, there is a non-adaptive algorithm that makes $O(d \log (n/\alpha))$ $\mathsf{OR}$ queries and if $|\textsf{supp}(v)| \leq d$, returns $\textsf{supp}(v)$ with probability $1-\alpha$, and otherwise certifies that $|\textsf{supp}(v)| > d$ with probability $1$. 
\end{lemma}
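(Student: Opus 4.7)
The plan is to use the textbook randomized non-adaptive group-testing construction. Draw $T = \lceil C(d+1)\log(n/\alpha)\rceil$ random sets $S_1,\ldots,S_T \subseteq [n]$ (for a constant $C$ to be fixed in the analysis) by placing each index into each $S_i$ independently with probability $p = 1/(d+1)$, and then submit the $T$ queries $b_i := \mathsf{OR}_{S_i}(v)$ in one shot. This fixes the query cost at $O(d \log(n/\alpha))$ and makes the algorithm non-adaptive.

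Decoding: call an index $j \in [n]$ a \emph{candidate} if $b_i = 1$ for every query $S_i$ with $j \in S_i$. Observe that any true defective $j \in \mathsf{supp}(v)$ is automatically a candidate, since every set containing it has $\mathsf{OR}_{S_i}(v) = 1$. Consequently the candidate set $C$ always satisfies $\mathsf{supp}(v) \subseteq C$. The algorithm outputs $C$ if $|C| \leq d$, and otherwise outputs $\mathsf{fail}$. This immediately gives the certification guarantee with probability $1$: if $|\mathsf{supp}(v)| > d$ then $|C| > d$ deterministically, and $\mathsf{fail}$ is output.

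The remaining task is to show that when $|\mathsf{supp}(v)| \leq d$, with probability at least $1-\alpha$ no non-defective $j$ survives as a candidate, so $C = \mathsf{supp}(v)$. Fix a non-defective index $j$. Call a query $S_i$ a \emph{witness against $j$} if $j \in S_i$ and $S_i \cap \mathsf{supp}(v) = \emptyset$, since then $b_i = 0$ excludes $j$ from being a candidate. Because the samples are independent, a single query is a witness with probability $p(1-p)^{|\mathsf{supp}(v)|} \geq \tfrac{1}{d+1}\bigl(1 - \tfrac{1}{d+1}\bigr)^d \geq \tfrac{1}{e(d+1)}$. Hence the probability that none of the $T$ queries is a witness is at most $\bigl(1 - \tfrac{1}{e(d+1)}\bigr)^T \leq \exp(-T/(e(d+1)))$, which is at most $\alpha/n$ once $C$ is chosen large enough (e.g., $C = 2e$). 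A union bound over the at most $n$ non-defective indices yields overall failure probability at most $\alpha$, as required.

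The main obstacle is essentially just bookkeeping: choosing the sampling probability $p = 1/(d+1)$ to balance the two competing requirements (each non-defective $j$ must land in many queries so it has a chance to be killed, but each query should also be disjoint from $\mathsf{supp}(v)$ with reasonable probability), and then verifying that $\Theta(d \log(n/\alpha))$ queries suffice for the final union bound. No adaptivity or further graph-theoretic machinery is needed; the argument is a clean probabilistic construction, and correctness for the $|\mathsf{supp}(v)| > d$ case is automatic from the inclusion $\mathsf{supp}(v) \subseteq C$.
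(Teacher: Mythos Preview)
Your argument is correct: the Bernoulli design with inclusion probability $1/(d+1)$ together with the ``possible defective'' (COMP) decoder is exactly the standard non-adaptive group-testing construction, and your analysis of the witness probability and the union bound is sound. The paper itself does not give a proof of this lemma---it simply imports it from \cite{BLMS24}---so there is nothing to compare against beyond noting that what you wrote is the canonical proof one would expect the cited reference to contain. One cosmetic remark: you use the symbol $C$ both for the constant in the query count and for the candidate set; renaming one of them would avoid confusion.
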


\begin{proof} The algorithm is defined using the above two lemmas as follows.

\begin{itemize}
    \item (\nth{1} Round) For every $u \in V$, run the procedure of \Cref{lem:NA-degree} using error probability $\eps = 1/n^2$, and let $\widetilde{d}_u$ be the corresponding output. If $\sum_{u \in V} \widetilde{d}_u > 8m$, then return $\mathsf{fail}$. 
    \item (\nth{2} Round) For every $u \in V$, run the procedure of \Cref{lem:NA-nbhd} using $d = \widetilde{d}_u$ and error probability $\alpha = 1/n^2$. Use the output as the neighborhood of $u$. 
\end{itemize}

The first round uses $O(n \log^2 n)$ queries and outputs $\widetilde{d}_u \in [\deg_G(u),4\deg_G(u)]$ for every $u \in V$ with probability at least $1-\eps n = 1-1/n$ by a union bound. Conditioned on this, we have 
\begin{align} \label{eq:edge-bound}
    \sum_{u \in V} \widetilde{d}_u \leq 4 \sum_{u \in V} \deg_G(u) = 8|E(G)| \leq 8m
\end{align}
and so the algorithm continues to the second round, which successfully returns the neighborhood of every vertex with probability at least $1-\alpha n = 1-1/n$ again by a union bound. Therefore, the algorithm succeeds with probability at least $1-2/n$ by a union bound over the two rounds. By \Cref{lem:NA-nbhd} and \cref{eq:edge-bound}, the number of queries made in the second round is bounded as %
\[
\sum_{u \in V} O(\widetilde{d}_u \log n) \leq O(m \log n)
\]
in the worst case, and this completes the proof. \end{proof}

\section{Lower Bounds} \label{sec:LB}

\subsection{Non-Adaptive Lower Bound} \label{sec:NA-LB}

In this section we prove \Cref{thm:LB-NA}, establishing a $\CC$-query complexity lower bound of $\Omega(n^2)$ for non-adaptive algorithms, even when the graph is known to have only $O(n)$ edges.

\NonAdaptiveLower* 

\begin{proof} For every distinct pair of vertices $\{u,v\} \in \binom{V}{2}$, we define a pair of graphs $G^0_{u,v}(V,E_0)$ and $G^1_{u,v}(V,E_1)$ as follows. The edges of $G^0_{u,v}$ are given by $E_0 := \{(u,w), (w,v) \colon w \in V \setminus \{u,v\}\}$, i.e. $G^0_{u,v}$ is a union of $2$-paths joining $u$ and $v$, through every other vertex $w$. The edges of $G^1_{u,v}$ are simply defined as $E_1 := E_0 \cup \{(u,v)\}$. Observe that for an algorithm to distinguish these two graphs it clearly must query a set $S$ that contains both $u$ and $v$. However, if $S$ contains any additional vertices besides $u,v$, then the subgraph induced on $S$ is connected for both graphs, and so $\CC(S)$ returns $1$ for both $G^0_{u,v}$ and $G^1_{u,v}$. Therefore, if a set of queries $Q \subseteq 2^V$ distinguishes these two graphs then it must be the case that $\{u,v\} \in Q$. Moreover, this implies that the number of pairs $\{u,v\} \in \binom{V}{2}$ such that $Q$ distinguishes $G^0_{u,v}$ and $G^1_{u,v}$ is at most $|Q|$. 

Now, let $A$ be any non-adaptive $\CC$-query algorithm which successfully recovers any arbitrary input graph with probability $\geq 1/2$. The algorithm $A$ queries a random set $Q \subseteq 2^V$ according to some distribution, $\cD_A$. In particular, for every $\{u,v\} \in \binom{V}{2}$, $Q$ distinguishes $G^0_{u,v}$ and $G^1_{u,v}$ with probability $\geq 1/2$. Thus,
\begin{align}
    \frac{1}{2} \binom{n}{2} &\leq \sum_{\{u,v\} \in \binom{V}{2}} \prob_{Q \sim \cD_A}[Q \text{ distinguishes } G^0_{u,v} \text{ and } G^1_{u,v}] \nonumber \\
    &= \EX_{Q \sim \cD_A}\left[ \left|\left\{\{u,v\} \in \binom{V}{2} \colon Q \text{ distinguishes } G^0_{u,v} \text{ and } G^1_{u,v} \right\}\right| \right] \leq \EX[|Q|] \nonumber
\end{align}
using linearity of expectation, and this completes the proof. \end{proof}

\subsection{Adaptive Lower Bound} \label{sec:A-LB}

In this section we prove \Cref{thm:LB-A}, establishing a $\CC$-query complexity lower bound of $\Omega(\frac{m \log n}{\log m})$ for arbitrary fully adaptive algorithms, for all values of $m$, and even when $m$ is provided to the algorithm.

\AdaptiveLower*

\begin{proof} We first consider the case of relatively small $m$ and give a simple information-theoretic argument. Concretely, let us assume $m \leq n$. Observe that for any $U \subseteq V$, the number of connected components in $G[U]$ is at least $|U|-m$ and at most $|U|$. Thus, any $\CC$-query returns at most $\log (m+1)$ bits of information. Since $m \leq n$, the total number of graphs with $n$ nodes and $m$ edges is 
\[
\binom{\binom{n}{2}}{m} \geq \left(\binom{n}{2} / m \right)^m \geq \left(\binom{n}{2} / n \right)^m = \left(\frac{n-1}{2}\right)^m = 2^{\Omega(m \log n)}
\]
implying that $\Omega(\frac{m \log n}{\log m})$ $\CC$-queries are needed in this case. \smallskip

Now suppose $m \geq n$. Note that in this case $\frac{m \log n}{\log m} = \Theta(m)$ and so it suffices to prove an $\Omega(m)$ lower bound. Consider the largest integer $n_0$ for which $\binom{n_0}{2} - 1 \leq m$. Let $m_{\text{dum}} = m - (\binom{n_0}{2} - 1 )$ and observe that 
\begin{align} \label{eq:dummy}
    m_{\text{dum}} < \left(\binom{n_0 + 1}{2} - 1 \right) - \left(\binom{n_0}{2} - 1 \right) = n_0 \text{.}
\end{align}
We define a family $\cG$ of $n$-node graphs with $m$ edges as follows. Fix a set $K$ of $n_0$ vertices and a set $I$ of $n-n_0$ vertices. Each graph $G \in \cG$ is defined as follows. First, make $G[K]$ an $n_0$-clique, but choose a single edge and remove it. This accounts for $\binom{n_0}{2} - 1$ edges. If $n = n_0$, i.e. $m = \binom{n}{2} - 1$, then $I = \emptyset$ and this completes the construction. Otherwise, we complete the construction by taking a single vertex $z_{\text{dum}} \in I$ and adding a dummy edge from $z_{\text{dum}}$ to $m_{\text{dum}}$-many vertices in $K$. (This is possible by \cref{eq:dummy}). Note the total number of edges is now $m$. 

Next, we reduce the task of finding the unique zero in a Boolean array of length $\binom{n_0}{2} = m + 1 - m_{\text{dum}}$ (unstructured search) to the task of reconstructing an arbitrary graph $G \in \cG$ using $\CC$-queries. As unstructured search requires $\Omega(m + 1 - m_{\text{dum}}) = \Omega(m)$ queries to achieve  success probability $1/2$, graph reconstruction will also require $\Omega(m)$ queries. (By \cref{eq:dummy}, $m_{\text{dum}} < n_0 = O(\sqrt{m})$.) 

We can arbitrarily map the elements in the Boolean array to potential edges in $G[K]$, where a one corresponds to an edge, and a zero corresponds to a non-edge. Suppose we have an algorithm $\cA$ for reconstructing $G \in \cG$, i.e., the algorithm is able to discover where the non-edge in $K$ is. Then we construct an algorithm $\cA'$ for finding the unique zero in the Boolean array by simulating $\cA$. Whenever $\cA$ makes a query to a subset $S$ where $|S \cap K| \ne 2$, $\cA'$ does not need to make any query. Indeed, in this case the vertices in $S \cap K$ are always connected, and all edges in $(S \cap K) \times (S \setminus K)$ are fixed, so $\CC(S)$ is fixed for graphs in $\cG$. If $\cA$ makes a query to a subset $S$ where $S \cap K = \{u, v\}$, $\cA'$ first makes a query to the element corresponding to the edge $\{u, v\}$. All other edges among $S \cap K$ are fixed for graphs in $\cG$, so the value $\CC(S)$ can be computed using $1$ query. Whenever $\cA$ terminates and finds the non-edge in $K$, $\cA'$ also terminates and claims the unique zero is the element corresponding to that non-edge. Since the number of queries $\cA'$ makes never exceeds the number of queries $\cA$ makes, this concludes the reduction and thus the proof. \end{proof}

\section{Conclusion and Open Questions}

We have shown tight bounds on the $\CC$-query complexity of graph reconstruction. Our algorithm is adaptive, and we have shown that non-adaptive algorithms cannot do better than the $O(n^2)$ trivial upper bound even for graphs with $O(n)$ edges. This is in contrast with the graph reconstruction problem using $\cut$ and $\density$ queries, where there exist non-adaptive algorithms attaining the optimal query complexity. We believe it is an interesting direction to investigate the minimal number of rounds that suffice to attain the optimal $\Theta(\frac{m \log n}{\log m})$ $\CC$-query complexity. Towards this, we have obtained a two round algorithm using $O(m \log n + n \log^2 n)$ queries.

\begin{question} [Rounds of adaptivity] What is the minimal number of rounds of adaptivity sufficient to obtain a $O(\frac{m \log n}{\log m})$ $\CC$-query algorithm? As a starting point, what is the optimal query complexity for algorithms using two rounds of adaptivity? \end{question}

Our results show that $\CC$-queries and $\add/\cut$-queries have different power for the graph reconstruction problem in terms of rounds of adaptivity. It would be interesting to investigate to what extent these queries are comparable. One concrete question towards this is the following.

\begin{question} [Counting edges] How many $\CC$-queries suffice to learn, or approximate, the number of edges in an arbitrary input graph? \end{question}

In general, it will be interesting to quantify the number of $\CC$-queries required to derive or verify (testing) some specific property of the graph. Potentially such tasks can be performed with much fewer  queries than required for reconstructing the graph.

\bibliographystyle{alpha}
\bibliography{biblio}

\newcommand{\etalchar}[1]{$^{#1}$}
\begin{thebibliography}{DMMdCT24}

\bibitem[AA05]{DBLP:journals/siamdm/AlonA05}
Noga Alon and Vera Asodi.
\newblock Learning a hidden subgraph.
\newblock {\em {SIAM} J. Discret. Math.}, 18(4):697--712, 2005.

\bibitem[AB19]{DBLP:conf/alt/AbasiN19}
Hasan Abasi and Nader~H. Bshouty.
\newblock On learning graphs with edge-detecting queries.
\newblock In {\em Algorithmic Learning Theory, {ALT} 2019}, volume~98 of {\em Proceedings of Machine Learning Research}, pages 3--30. {PMLR}, 2019.

\bibitem[ABK{\etalchar{+}}04]{DBLP:journals/siamcomp/AlonBKRS04}
Noga Alon, Richard Beigel, Simon Kasif, Steven Rudich, and Benny Sudakov.
\newblock Learning a hidden matching.
\newblock {\em {SIAM} J. Comput.}, 33(2):487--501, 2004.

\bibitem[AC08]{DBLP:journals/jcss/AngluinC08}
Dana Angluin and Jiang Chen.
\newblock Learning a hidden graph using {O(log n)} queries per edge.
\newblock {\em J. Comput. Syst. Sci.}, 74(4):546--556, 2008.

\bibitem[ACK21]{DBLP:conf/esa/AssadiCK21}
Sepehr Assadi, Deeparnab Chakrabarty, and Sanjeev Khanna.
\newblock Graph connectivity and single element recovery via linear and {OR} queries.
\newblock In {\em European Symposium on Algorithms, (ESA)}, 2021.

\bibitem[AEG{\etalchar{+}}22]{DBLP:conf/focs/ApersEGLMN22}
Simon Apers, Yuval Efron, Pawel Gawrychowski, Troy Lee, Sagnik Mukhopadhyay, and Danupon Nanongkai.
\newblock Cut query algorithms with star contraction.
\newblock In {\em Proceedings of the 63rd {IEEE} Annual Symposium on Foundations of Computer Science (FOCS 2022)}, pages 507--518, 2022.

\bibitem[AKBD16]{ashtiani2016clustering}
Hassan Ashtiani, Shrinu Kushagra, and Shai Ben-David.
\newblock Clustering with same-cluster queries.
\newblock {\em Advances in neural information processing systems}, 29, 2016.

\bibitem[AL21]{DBLP:journals/corr/abs-2109-02115}
Arinta Auza and Troy Lee.
\newblock On the query complexity of connectivity with global queries.
\newblock {\em CoRR}, abs/2109.02115, 2021.

\bibitem[ASW25]{DBLP:conf/soda/0001SW25}
Aditya Anand, Thatchaphol Saranurak, and Yunfan Wang.
\newblock Deterministic edge connectivity and max flow using subquadratic cut queries.
\newblock In {\em Symposium on Discrete Algorithms, {SODA}}, 2025.

\bibitem[BB08]{balcan2008clustering}
Maria-Florina Balcan and Avrim Blum.
\newblock Clustering with interactive feedback.
\newblock In {\em International Conference on Algorithmic Learning Theory}, pages 316--328. Springer, 2008.

\bibitem[BCBLP20]{bressan2020exact}
Marco Bressan, Nicol{\`o} Cesa-Bianchi, Silvio Lattanzi, and Andrea Paudice.
\newblock Exact recovery of mangled clusters with same-cluster queries.
\newblock {\em Advances in Neural Information Processing Systems}, 33:9324--9334, 2020.

\bibitem[BDM14]{DBLP:journals/siamcomp/BoutsidisDM14}
Christos Boutsidis, Petros Drineas, and Malik Magdon{-}Ismail.
\newblock Near-optimal column-based matrix reconstruction.
\newblock {\em {SIAM} J. Comput.}, 43(2):687--717, 2014.

\bibitem[BEE{\etalchar{+}}06]{DBLP:journals/jsac/BeerliovaEEHHMR06}
Zuzana Beerliova, Felix Eberhard, Thomas Erlebach, Alexander Hall, Michael Hoffmann, Mat{\'{u}}s Mihal{\'{a}}k, and L.~Shankar Ram.
\newblock Network discovery and verification.
\newblock {\em {IEEE} J. Sel. Areas Commun.}, 24(12):2168--2181, 2006.

\bibitem[BF93]{bunge1993estimating}
John Bunge and Michael Fitzpatrick.
\newblock Estimating the number of species: a review.
\newblock {\em Journal of the American statistical Association}, 88(421):364--373, 1993.

\bibitem[BG23]{DBLP:journals/corr/abs-2306-05979}
Paul Bastide and Carla Groenland.
\newblock Optimal distance query reconstruction for graphs without long induced cycles.
\newblock {\em CoRR}, abs/2306.05979, 2023.

\bibitem[BLMS24]{BLMS24}
Hadley Black, Euiwoong Lee, Arya Mazumdar, and Barna Saha.
\newblock Clustering with non-adaptive subset queries.
\newblock In {\em Neural Information Processing Systems (NeurIPS)}, 2024.

\bibitem[BM11]{DBLP:journals/tcs/BshoutyM11}
Nader~H. Bshouty and Hanna Mazzawi.
\newblock Reconstructing weighted graphs with minimal query complexity.
\newblock {\em Theor. Comput. Sci.}, 412(19):1782--1790, 2011.

\bibitem[BM12]{DBLP:journals/tcs/BshoutyM12}
Nader~H. Bshouty and Hanna Mazzawi.
\newblock Toward a deterministic polynomial time algorithm with optimal additive query complexity.
\newblock {\em Theor. Comput. Sci.}, 2012.

\bibitem[BM15]{DBLP:journals/siamdm/BshoutyM15}
Nader~H. Bshouty and Hanna Mazzawi.
\newblock On parity check (0, 1)-matrix over {\(\mathbb{z}\)}\({}_{\mbox{p}}\).
\newblock {\em {SIAM} J. Discret. Math.}, 29(1):631--657, 2015.

\bibitem[BvdBE{\etalchar{+}}22]{DBLP:conf/focs/BlikstadBEMN22}
Joakim Blikstad, Jan van~den Brand, Yuval Efron, Sagnik Mukhopadhyay, and Danupon Nanongkai.
\newblock Nearly optimal communication and query complexity of bipartite matching.
\newblock In {\em Foundations of Computer Science, {FOCS}}, 2022.

\bibitem[CEM{\etalchar{+}}15]{DBLP:conf/stoc/CohenEMMP15}
Michael~B. Cohen, Sam Elder, Cameron Musco, Christopher Musco, and Madalina Persu.
\newblock Dimensionality reduction for k-means clustering and low rank approximation.
\newblock In {\em Proceedings of the Forty-Seventh Annual {ACM} on Symposium on Theory of Computing ({STOC} 2015)}, pages 163--172, 2015.

\bibitem[Cha21]{DBLP:conf/stoc/000121}
Zachary Chase.
\newblock Separating words and trace reconstruction.
\newblock In {\em Symposium on Theory of Computing (STOC)}, 2021.

\bibitem[Cho13]{DBLP:conf/colt/Choi13}
Sung{-}Soon Choi.
\newblock Polynomial time optimal query algorithms for finding graphs with arbitrary real weights.
\newblock In {\em Proceedings of the 26th Annual Conference on Learning Theory ({COLT} 2013)}, pages 797--818, 2013.

\bibitem[CK10]{CK10}
Sung{-}Soon Choi and Jeong~Han Kim.
\newblock Optimal query complexity bounds for finding graphs.
\newblock {\em Artif. Intell.}, 174(9-10):551--569, 2010.

\bibitem[CL23]{DBLP:conf/alt/ChakrabartyL23}
Deeparnab Chakrabarty and Hang Liao.
\newblock A query algorithm for learning a spanning forest in weighted undirected graphs.
\newblock In {\em International Conference on Algorithmic Learning Theory (ALT)}, 2023.

\bibitem[CL24]{DBLP:conf/fsttcs/Chakrabarty024}
Deeparnab Chakrabarty and Hang Liao.
\newblock Learning partitions using rank queries.
\newblock In {\em Foundations of Software Technology and Theoretical Computer Science, {FSTTCS}}, 2024.

\bibitem[DMMdCT24]{DMT24}
Adela DePavia, Olga Medrano Martin~del Campo, and Erasmo Tani.
\newblock Optimal algorithms for learning partitions with faulty oracles.
\newblock In {\em Adv. in Neu. Inf. Proc. Sys. (NeurIPS)}, 2024.

\bibitem[DPMT22]{del2022clustering}
Alberto Del~Pia, Mingchen Ma, and Christos Tzamos.
\newblock Clustering with queries under semi-random noise.
\newblock In {\em Conference on Learning Theory}, pages 5278--5313. PMLR, 2022.

\bibitem[EHHM06]{DBLP:conf/ciac/ErlebachHHM06}
Thomas Erlebach, Alexander Hall, Michael Hoffmann, and Mat{\'{u}}s Mihal{\'{a}}k.
\newblock Network discovery and verification with distance queries.
\newblock In {\em Proceedings of the 6th Italian Conference on Algorithms and Complexity ({CIAC} 2006)}, pages 69--80, 2006.

\bibitem[Fra78]{frank1978estimation}
Ove Frank.
\newblock Estimation of the number of connected components in a graph by using a sampled subgraph.
\newblock {\em Scandinavian Journal of Statistics}, pages 177--188, 1978.

\bibitem[GK98]{DBLP:journals/dam/GrebinskiK98}
Vladimir Grebinski and Gregory Kucherov.
\newblock Reconstructing a hamiltonian cycle by querying the graph: Application to {DNA} physical mapping.
\newblock {\em Discret. Appl. Math.}, 88(1-3):147--165, 1998.

\bibitem[GK00]{DBLP:journals/algorithmica/GrebinskiK00}
Vladimir Grebinski and Gregory Kucherov.
\newblock Optimal reconstruction of graphs under the additive model.
\newblock {\em Algorithmica}, 28(1):104--124, 2000.

\bibitem[GP16]{DBLP:conf/podc/GhaffariP16}
Mohsen Ghaffari and Merav Parter.
\newblock {MST} in log-star rounds of congested clique.
\newblock In {\em Proceedings of the 2016 {ACM} Symposium on Principles of Distributed Computing (PODC)}, pages 19--28. {ACM}, 2016.

\bibitem[Gre98]{DBLP:conf/cocoon/Grebinski98}
Vladimir Grebinski.
\newblock On the power of additive combinatorial search model.
\newblock In {\em Proceedings of the 4th Annual International Conference on Computing and Combinatorics ({COCOON})}, pages 194--203, 1998.

\bibitem[Har08]{harvey2008matchings}
Nicholas James~Alexander Harvey.
\newblock {\em Matchings, matroids and submodular functions}.
\newblock PhD thesis, Massachusetts Institute of Technology, 2008.

\bibitem[HMMP19]{huleihel2019same}
Wasim Huleihel, Arya Mazumdar, Muriel M{\'e}dard, and Soumyabrata Pal.
\newblock Same-cluster querying for overlapping clusters.
\newblock {\em Advances in Neural Information Processing Systems}, 32, 2019.

\bibitem[HPP18]{DBLP:conf/colt/HoldenPP18}
Nina Holden, Robin Pemantle, and Yuval Peres.
\newblock Subpolynomial trace reconstruction for random strings and arbitrary deletion probability.
\newblock In {\em Conference On Learning Theory, {COLT}}, 2018.

\bibitem[KKL16]{DBLP:conf/aaim/KranakisKL16}
Evangelos Kranakis, Danny Krizanc, and Yun Lu.
\newblock Reconstructing cactus graphs from shortest path information - (extended abstract).
\newblock In {\em Proceedings of the 11th International Conference on Algorithmic Aspects in Information and Management ({AAIM} 2016)}, pages 150--161, 2016.

\bibitem[KKU95]{DBLP:conf/sirocco/KranakisKU95}
Evangelos Kranakis, Danny Krizanc, and Jorge Urrutia.
\newblock Implicit routing and shortest path information (extended abstract).
\newblock In {\em Proceedings of the 2nd Colloquium on Structural Information and Communication Complexity ({SIROCCO} 1995)}, pages 101--112, 1995.

\bibitem[KMZ15]{DBLP:conf/icalp/KannanM015}
Sampath Kannan, Claire Mathieu, and Hang Zhou.
\newblock Near-linear query complexity for graph inference.
\newblock In {\em Proceedings of the 42nd International Colloquium on Automata, Languages, and Programming ({ICALP} 2015)}, pages 773--784, 2015.

\bibitem[KMZ18]{DBLP:journals/talg/KannanMZ18}
Sampath Kannan, Claire Mathieu, and Hang Zhou.
\newblock Graph reconstruction and verification.
\newblock {\em {ACM} Trans. Algorithms}, 2018.

\bibitem[KOT25]{KST25}
Christian Konrad, Conor O'Sullivan, and Victor Traistaru.
\newblock Graph reconstruction via {MIS} queries.
\newblock In {\em Innovations in Theoretical Computer Science (ITCS)}, 2025.

\bibitem[LC24]{DBLP:conf/alt/0001C24}
Hang Liao and Deeparnab Chakrabarty.
\newblock Learning spanning forests optimally in weighted undirected graphs with {CUT} queries.
\newblock In {\em International Conference on Algorithmic Learning Theory (ALT)}, 2024.

\bibitem[LM22]{liu2022tight}
Xizhi Liu and Sayan Mukherjee.
\newblock Tight query complexity bounds for learning graph partitions.
\newblock In {\em Conference on Learning Theory}, pages 167--181. PMLR, 2022.

\bibitem[Maz10]{DBLP:conf/soda/Mazzawi10}
Hanna Mazzawi.
\newblock Optimally reconstructing weighted graphs using queries.
\newblock In {\em Proceedings of the Twenty-First Annual {ACM-SIAM} Symposium on Discrete Algorithms ({SODA} 2010)}, pages 608--615, 2010.

\bibitem[MS17a]{MS17a}
Arya Mazumdar and Barna Saha.
\newblock Clustering with noisy queries.
\newblock In {\em Adv. in Neu. Inf. Proc. Sys. (NeurIPS)}, 2017.

\bibitem[MS17b]{MS17b}
Arya Mazumdar and Barna Saha.
\newblock Query complexity of clustering with side information.
\newblock In {\em Adv. in Neu. Inf. Proc. Sys. (NeurIPS)}, 2017.

\bibitem[MS17c]{mazumdar2017theoretical}
Arya Mazumdar and Barna Saha.
\newblock A theoretical analysis of first heuristics of crowdsourced entity resolution.
\newblock In {\em Proceedings of the AAAI Conference on Artificial Intelligence}, 2017.

\bibitem[MT16]{mitzenmacher2016predicting}
Michael Mitzenmacher and Charalampos~E. Tsourakakis.
\newblock Predicting signed edges with {$O(n^{1+o(1)}\log n)$} queries.
\newblock {\em arXiv preprint arXiv:1609.00750}, 2016.

\bibitem[MZ13]{DBLP:conf/icalp/MathieuZ13}
Claire Mathieu and Hang Zhou.
\newblock Graph reconstruction via distance oracles.
\newblock In {\em Proceedings of the 40th International Colloquium on Automata, Languages, and Programming ({ICALP} 2013)}, pages 733--744, 2013.

\bibitem[MZ23]{DBLP:journals/rsa/MathieuZ23}
Claire Mathieu and Hang Zhou.
\newblock A simple algorithm for graph reconstruction.
\newblock {\em Random Struct. Algorithms}, 2023.

\bibitem[RS07]{DBLP:conf/alt/ReyzinS07}
Lev Reyzin and Nikhil Srivastava.
\newblock Learning and verifying graphs using queries with a focus on edge counting.
\newblock In {\em Proceedings of the 18th International Conference on Algorithmic Learning Theory ({ALT} 2007)}, pages 285--297, 2007.

\bibitem[SS19]{saha2019correlation}
Barna Saha and Sanjay Subramanian.
\newblock Correlation clustering with same-cluster queries bounded by optimal cost.
\newblock In {\em 27th Annual European Symposium on Algorithms (ESA 2019)}. Schloss-Dagstuhl-Leibniz Zentrum f{\"u}r Informatik, 2019.

\end{thebibliography}

\end{document}